\renewcommand{\a}{\alpha}
\renewcommand{\b}{\ensuremath{\beta}}
\renewcommand{\c}{\gamma}
\renewcommand{\d}{\delta}
\newcommand{\e}{\varepsilon}
\newtheorem{example}{Example}
\newtheorem{remark}{Remark}
\newtheorem{lemma}{Lemma}
\newtheorem{theorem}{Theorem}
\numberwithin{equation}{chapter} \makeatletter
\begin{document}

\title{Introduction to Classical Gauge Field
  Theory and to Batalin-Vilkovisky Quantization}

\author{Glenn Barnich and Fabrizio Del Monte}

\date{}

\def\mytitle{Introduction to Classical Gauge Field
  Theory and to Batalin-Vilkovisky Quantization}

\addtolength{\headsep}{4pt}

\begin{centering}

  \vspace{1cm}

  \textbf{\Large{\mytitle}}

  \vspace{1.5cm}

  \textbf{Lectures held at the 22nd ``Saalburg'' Summer School (2016)} 

  \vspace{1.5cm}

  A course by
\vspace{.2cm}

{\Large 
Glenn~Barnich$^{\, a,}$\footnote{gbarnich@ulb.ac.be}
}

\vspace{.5cm}
Notes written and edited by

\vspace{.2cm}
{\Large 
Fabrizio Del Monte$^{\, b,}$\footnote{fabrizio.delmonte@sissa.it} 
}
\vspace{1cm}
{\small
\begin{center}
 {a) {\it Physique Th\'eorique et Math\'ematique}\\ 
{\it Universit\'e libre de Bruxelles \& International Solvay Institutes}\\
  {\it Boulevard du Triomphe, Campus Plaine C.P. 231, B-1050 Bruxelles, Belgique}} \\
  \vspace{1cm}
 {b) {\it International School for Advanced Studies (SISSA), Theoretical Particle Physics}\\
  {\it Via Bonomea, 265, 34136 Trieste, Italy and INFN, Sezione di Trieste}} \\
\end{center}
}
\end{centering}

\vfill
\thispagestyle{empty}
\newpage

{\footnotesize \tableofcontents}

\chapter{Locality and the horizontal complex}

This set of lectures loosely follows \cite{Henneaux:1992ig} by
emphasizing Lagrangian field theoretic aspects that are not covered
there in detail, and are not usually treated in standard textbooks
either. The choice of material is to a large extent
idiosyncratic. References to well-known original work underlying the
subject, such as
\cite{Becchi:1975nq,Zinn-Justin:1974mc,Batalin:1981jr,Voronov:1982ph}, 
have generally been omitted. They can be found in the cited reviews.

The goal of the first lecture is to rephrase the basic objects of
classical field theory in the language of jet bundles. Although this
requires a bit of work in the beginning, it is worthwhile because
objects such as actions and functional derivatives can be phrased in a
purely algebraic way in this framework.

Original work in this context goes back for instance to
\cite{Vinogradov:1977,Vinogradov:1978,Vinogradov:1984}. We follow here
the exposition of \cite{Andersonbook,Anderson1991,Olver:1993} (see
also \cite{Dickey:1991xa}), where detailed proofs and more references
to original work can be found. The first three chapters have been
treated in the current form in \cite{Barnich2001}.

\section{Jet-bundles}

In classical mechanics the central object is an \textbf{action
  functional} $S$, which is the integral of a Lagrangian density $L$:
\begin{equation}
S[q]=\int^{t_2}_{t_1} dt\, L\left(q^i(t),\dot{q}^i(t),t \right).
\end{equation}
The equations of motion are obtained by requiring the variation of the
action to be an extremum at a classical solution. More precisely, when
considering virtual variations at fixed time that vanish at the end
points, this amounts to requiring that the \textbf{Euler-Lagrange
  derivatives} of the Lagrangian vanish. If the Lagrangian $L$ is a
function of the generalized coordinates $q^i$ and their first order
time derivatives $\dot{q}^i$, the latter are defined by
\begin{align}\label{eq:EL1}
  \frac{\delta L}{\delta q^i}=\frac{\partial L}{\partial q^i}-\frac{d}{dt}
  \frac{\partial L}{\partial\dot{q}^i}, &&
 \frac{d}{dt}=\frac{\partial}{\partial t}
 +\dot{q}^i\frac{\partial}{\partial q^i}+\ddot{q}^i\frac{\partial}{\partial\dot{q}^i}.
\end{align}
\begin{remark}
  In order for the Euler-Lagrange operator to make sense, $q^i$ and
  $\dot{q}^i$ have to be considered as independent variables. The
  space on which they are coordinates is the (first order) jet-space.
\end{remark}
\begin{remark}
  When acting on a function that depends on the first order time
  derivatives, the total time derivative derivative $\frac{d}{dt}$
  involves second order time derivatives. In other words, it maps
  functions defined on the first order jet-space to functions defined
  on the second order jet-space.
\end{remark}

The underlying geometrical structure is described by introducing a new
mathematical object, the \textbf{jet bundle}. This is a fiber
bundle\footnote{The reader unfamiliar with the concept of fiber
  bundles should not be worried: for what concerns our discussion, it
  suffices to have in mind the conceptual idea that underlies them. As
  a Riemannian manifold is a smooth deformation of $\mathbb{R}^n$,
  which locally looks like flat space, a fiber bundle can be thought
  as a smooth deformation of a direct product of two spaces, that
  locally looks like the direct product itself.} which has as base
space $M$ the real line $\mathbb{R}$ in the case of classical mechanics
and Minkowski space $\mathbb{R}^{3,1} $ in the case of relativistic
field theory, and as fiber $V$ a manifold whose local coordinates are the
fields and their derivatives. Coordinates on the base are often called
``independent variables'', while coordinates on the fiber are
dependent ones.

In the case of classical mechanics, restricting to the
case of Lagrangians depending only on the coordinates and velocities,
the jet bundle will be parametrized by $t,q,\dot{q} $.

For a field theory, let $V$ be locally parametrized by the fields
$\phi^i$ (which we take to be even for simplicity), and let
\begin{equation}
E\equiv M\times V.
\end{equation}
Let us then denote by $V^k$ the \textbf{k-th jet space}, i.e. the
manifold obtained by extending the set of dependent coordinates to
\begin{equation}
\phi^i,\phi^i_\mu,\dots,\phi^i_{\mu_1\dots\mu_k},
\end{equation}
where the $\phi^i_{\mu_1\dots\mu_k}$ are completely symmetric in their
lower indices.

Locally, the \textbf{jet bundle}
$J^k(E)$ is the direct product
\begin{equation}
J^k(E)=M\times V^k.
\end{equation}
When taking suitable \textbf{sections} $s$, the coordinates of $V^k$
are to be identified with field ``histories'' and their derivatives up
to order $k$,
\begin{equation}
\begin{split}
s: \	& M \longrightarrow J^k(E) \\
  	& x^\mu\mapsto  (x^\mu,\phi^i(x),\frac{\partial\phi^i(x)}{\partial x^\mu},\dots).
\end{split}
\end{equation}

\begin{remark}
  The components $g_{\mu\nu}$ of the metric in General Relativity are
  not independent local coordinates for the fiber of the jet bundle,
  since they must satisfy, besides symmetry, the additional condition
\begin{equation}
\det g_{\mu\nu}\ne 0.
\end{equation}
\end{remark}

\vfill
\pagebreak

\textbf{Local functions}
\begin{equation}
f[x,\phi]=f(x,\phi^i,\dots,\phi^i_{\mu_1\dots\mu_k})
\end{equation}
are defined to be smooth functions on $J^k(E)$, for some (unspecified)
$k$. They are functions that depend on the base space coordinates
$x^\mu$, the fields and a finite number of their derivatives. Typical
examples of local functions are the Lagrangian (density) $L$ of
classical mechanics or of field theory.

A \textbf{vector field} on $J^k(E)$ has the form
\begin{equation}\label{eq:B1VecF}
  \textbf{v}=a^\mu\frac{\partial}{\partial x^\mu}
  +b^i\frac{\partial}{\partial\phi^i}
  +\sum_{l=1}^k \sum_{0\le\mu_1\le\dots\le\mu_k}b^i_{\mu_1\dots\mu_k}
  \frac{\partial}{\partial\phi^i_{\mu_1\dots\mu_l}}.
\end{equation}
This notation is not very practical, as it does not allow one to use
Einstein's summation convention for repeated indices. One thus defines
the \textbf{symmetrized derivative}
\begin{equation}
  \frac{\partial^s}{\partial\phi^i_{\mu_1\dots\mu_l}}=\frac{m_0!\dots
    m_{n-1}!}{l!}
  \frac{\partial}{\partial\phi^i_{\mu_1\dots\mu_l}},
\end{equation}
where $m_\mu$ denotes the number of times the index $\mu$ appears in
$\mu_1\dots\mu_l$, and in terms of which equation \eqref{eq:B1VecF}
takes the more compact form
\begin{equation}
  \textbf{v}=a^\mu\frac{\partial}{\partial x^\mu}
  +b^i\frac{\partial}{\partial\phi^i}
  +\sum_{l=1}^kb^i_{\mu_1\dots\mu_l}
  \frac{\partial^s}{\partial\phi_{\mu_1\dots\mu_l}}.
\end{equation}
This expression can be simplified further by using a \textbf{multi-index
  notation}:
\begin{equation}
(\mu)\equiv\{\emptyset,\mu_1,\mu_1\mu_2,\dots \},\quad
|\mu|=0,1,2,\dots, 
\end{equation}
with $(\mu)$ the multi-index and $|\mu|$ its length, the definition
of a vector field \eqref{eq:B1VecF} becomes
\begin{equation}
\textbf{v}=a^\mu\frac{\partial}{\partial x^\mu}+b^i_{(\mu)}\frac{\partial}{\partial\phi^i_{(\mu)}},
\end{equation}
in which Einstein's summation convention for multi-indices includes a
summation over their lengths. 

\section{Total and Euler-Lagrange derivatives}
\label{sec:total-euler-lagrange}

\textbf{Total derivatives}
are a generalization of $\frac{d}{dt} $ in equation \eqref{eq:EL1},
from classical mechanics to field theory, and to higher order
derivatives:
\begin{equation}
\begin{split}
  \partial_\nu & \equiv\frac{\partial}{\partial x^\nu}
  +\sum_{l}\phi^i_{\mu_1\dots\mu_l\nu}\frac{\partial^s}{\partial\phi^i_{\mu_1\dots\mu_l}} \\
  & =\frac{\partial}{\partial
    x^\nu}+\phi^i_{((\mu)\nu)}\frac{\partial}{\partial\phi^i_{(\mu)}}.
\end{split}
\end{equation}
Again, they are not vector fields on some $J^k(E)$ for a fixed $k$:
when acting on smooth functions on $J^k(E)$, they produce smooth
functions on $J^{k+1}(E)$. Total derivatives satisfy two important
properties: the first one is that, if a local function is evaluated at
a section,
\begin{equation}
f|_s=f\left(x,\phi^i(x),\dots,\frac{\partial\phi^i(x)}{\partial
    x^{\mu_1}\dots\partial x^{\mu_k}} \right), 
\end{equation}
then 
\begin{equation} \label{eq:1totalder}
\left(\partial_\mu f \right)|_s=\frac{d}{dx^\mu}\left(f|_s \right).
\end{equation}
The second one is that total derivatives commute:
\begin{equation}
[\partial_\mu,\partial_\nu]=0.
\end{equation}

The \textbf{Euler-Lagrange derivative}
of a local function $f$ is defined as
\begin{equation}\label{eq:1ELDer}
\begin{split}
  \frac{\delta f}{\delta\phi^i} &
  =\sum_{l=0}^k(-)^l\partial_{\mu_1}\dots\partial_{\mu_l}
  \frac{\partial^sf}{\partial\phi^i_{\mu_1\dots\mu_l}} \\
& =(-)^{|\mu|}\partial_{(\mu)}\frac{\partial f}{\partial\phi^i_{(\mu)}}.
\end{split}
\end{equation}
A first important lemma is the following:
\begin{lemma}\label{Lemma1}
  The Euler-Lagrange derivative of a local function is zero if and
  only if it is a total divergence, i.e.,
\begin{align}
\frac{\delta f}{\delta\phi^i}=0 && \Longleftrightarrow &&
f=\partial_\mu j^\mu,
       \label{equiv}                                                   
\end{align}
for some local functions $j^\mu$.
\end{lemma}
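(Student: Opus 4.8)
The plan is to prove the two implications separately, with the nontrivial direction being the necessity of the total-divergence form.

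The plan is to prove the two implications separately; the direction $f=\partial_\mu j^\mu \Rightarrow \frac{\delta f}{\delta\phi^i}=0$ is a direct computation, while the converse is the substantive ``algebraic Poincaré lemma'' and requires a homotopy-type argument. For the easy direction, by linearity of the Euler--Lagrange operator it suffices to show that $\frac{\delta}{\delta\phi^i}(\partial_\nu g)=0$ for an arbitrary local function $g$. First I would establish the commutation rule between the jet derivatives and the total derivative, which in multi-index notation reads $\frac{\partial}{\partial\phi^i_{(\mu)}}\partial_\nu = \partial_\nu\frac{\partial}{\partial\phi^i_{(\mu)}} + \frac{\partial}{\partial\phi^i_{(\mu)\setminus\nu}}$, where $(\mu)\setminus\nu$ denotes the multi-index with one factor $\nu$ removed (the extra term being absent when $\nu$ does not occur in $(\mu)$). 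Substituting $\partial_\nu g$ into the defining formula \eqref{eq:1ELDer} and using this rule produces two families of terms; after relabelling the summation multi-index in the second family, the alternating sign $(-)^{|\mu|}$ together with the index shift makes them cancel identically, so the result is zero.

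For the converse the key is an integration-by-parts identity on jet space. Writing $\phi^i_{(\mu)}=\partial_{(\mu)}\phi^i$ and integrating by parts $|\mu|$ times inside the dilation operator, I would prove
\[
\phi^i_{(\mu)}\frac{\partial f}{\partial\phi^i_{(\mu)}}
= \phi^i\,\frac{\delta f}{\delta\phi^i} + \partial_\mu S^\mu ,
\]
for some local functions $S^\mu$ built explicitly from $f$; the sum is finite since $f$ depends on only finitely many jet coordinates. Next I would decompose $f=\sum_{d\geq 0}f_{(d)}$ into parts homogeneous of degree $d$ in the fields and their derivatives. Because the total derivative $\partial_\nu$ preserves this field-degree, the Euler--Lagrange derivative maps degree $d$ to degree $d-1$, so $\frac{\delta f}{\delta\phi^i}=0$ forces $\frac{\delta f_{(d)}}{\delta\phi^i}=0$ for each $d$ separately. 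On a homogeneous piece the left-hand side of the identity equals $d\,f_{(d)}$, hence for every $d\geq 1$ one immediately obtains $f_{(d)}=\frac{1}{d}\partial_\mu S^\mu_{(d)}$, a total divergence.

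The one remaining piece is the field-independent part $f_{(0)}=f|_{\phi=0}$, a smooth function of the base coordinates $x^\mu$ alone. Here I would invoke the triviality of the base $M=\mathbb{R}^n$: setting $k^1(x)=\int_0^{x^1} f_{(0)}(t,x^2,\dots,x^n)\,dt$ and $k^{\mu}=0$ otherwise gives $f_{(0)}=\partial_\mu k^\mu$, since on a function of $x$ only the total derivative reduces to the ordinary partial derivative. Summing up, $f=\partial_\mu\big(\sum_{d\geq1}\tfrac{1}{d}S^\mu_{(d)}+k^\mu\big)$, which completes the proof. I expect the main obstacle to be essentially bookkeeping: verifying that the repeated integration by parts terminates and yields genuine local functions $S^\mu$, and tracking the degree decomposition carefully. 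The conceptual point worth flagging is that the converse genuinely relies on the contractibility of $M$; over a base with nontrivial topology the statement fails and must be corrected by de Rham cohomology.
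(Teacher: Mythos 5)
Your easy direction and your key identity
\begin{equation*}
\phi^i_{(\mu)}\frac{\partial f}{\partial\phi^i_{(\mu)}}
= \phi^i\,\frac{\delta f}{\delta\phi^i} + \partial_\mu S^\mu
\end{equation*}
are exactly the ingredients the paper uses, and your handling of the field-independent part by an explicit primitive along $x^1$ is a perfectly good (even more elementary) substitute for invoking the Poincar\'e lemma in top form degree. Where you diverge from the paper is in how you turn the identity into a statement about $f$ itself: you decompose $f=\sum_{d\ge 0}f_{(d)}$ into homogeneous components and apply Euler's identity degree by degree, whereas the paper applies the identity to $f(x,\lambda\phi)$ and integrates, $f[x,\phi]-f[x,0]=\int_0^1\frac{d\lambda}{\lambda}\,\bigl(\phi^i_{(\nu)}\tfrac{\partial f}{\partial\phi^i_{(\nu)}}\bigr)[x,\lambda\phi]$, obtaining $j^\mu=\xi^\mu+\int_0^1\frac{d\lambda}{\lambda}\,\tilde k^\mu[x,\lambda\phi]$ in one stroke.

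The gap is in the decomposition step. Local functions here are defined as \emph{smooth} functions on $J^k(E)$, not polynomials in the fibre coordinates, so $f=\sum_{d\ge 0}f_{(d)}$ is only a formal Taylor expansion: it need not converge to $f$, and even when each $f_{(d)}$ is defined the resummation $\sum_{d\ge 1}\tfrac{1}{d}S^\mu_{(d)}$ need not define a local function. Your argument is therefore complete only in the polynomial setting. The repair is precisely the paper's $\lambda$-integral, which is the continuum version of your ``divide by the degree'': the factor $\tfrac{1}{d}$ becomes $\int_0^1\lambda^{d-1}d\lambda$, and the integral representation makes sense for arbitrary smooth $f$ because $\tilde k^\mu[x,0]=0$ renders the integrand regular at $\lambda=0$. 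One further bookkeeping point: with the symmetrized derivatives $\partial^s/\partial\phi^i_{(\mu)}$ your commutator $[\partial/\partial\phi^i_{(\mu)},\partial_\nu]=\delta^{(\mu)}_{((\lambda)\nu)}\partial/\partial\phi^i_{(\lambda)}$ carries symmetrization weights that must be tracked for the cancellation in the easy direction to come out exactly; you flag this as bookkeeping, which is fair, but it is worth writing out once.
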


\begin{center}
  ----------
\end{center}

\subsection{Exercise 1}
\label{sec:exercice-1}

Prove Lemma 1.

\begin{proof}
Let us first prove that
\begin{align}
  f=\partial_\mu j^\mu && \Longrightarrow &&
  \frac{\delta f}{\delta\phi^i}=0.\label{condsuff}
\end{align}
In order to show this, we need to use the commutation relation
\begin{equation}
\left[\frac{\partial^s}{\partial\phi_{\mu_1\dots\mu_k}},\partial_\nu
\right]=\delta^{\mu_1}_{(\nu}
\delta^{\mu_2}_{\lambda_1} \dots\delta^{\mu_k}_{\lambda_{k-1})}
\frac{\partial^s}{\partial\phi^i_{(\lambda_1\dots\lambda_{k-1})}}
\equiv\delta^{(\mu)}_{(\nu(\lambda))}\frac{\partial}{\partial\phi^i_{(\lambda)}},
\end{equation}
which follows directly from the various definitions. In
multi-index notation, it becomes
\begin{equation}
  \left[\frac{\partial}{\partial\phi^i_{(\mu)}},\partial_\nu \right]
  =\delta^{(\mu)}_{((\lambda)\nu)}\frac{\partial}{\partial\phi^i_{(\lambda)}}.
\end{equation}
With this, we are ready to show \eqref{condsuff}:
\begin{equation}
\begin{split}
  \frac{\delta}{\delta\phi^i}(\partial_\mu j^\mu) &
  =(-)^{|\nu|}\partial_{(\nu)}
  \left(\frac{\partial}{\partial\phi^i_{(\nu)}}(\partial_\mu j^\mu)\right) \\
  & =
  (-)^{|\nu|}\partial_{(\nu)}\left(\left[\frac{\partial}{\partial\phi^i_{(\nu)}},\partial_\mu
    \right]j^\mu\right)
  +(-)^{|\nu|}\partial_{(\nu)}\partial_\mu\left(\frac{\partial j^\mu}{\partial\phi^i_{(\nu)}}\right) \\
  &
  =(-)^{|\nu|}\delta^{(\nu)}_{(\mu(\lambda))}\partial_{(\nu)}\frac{\partial
    j^\mu}{\partial\phi^i_{(\lambda)}}
  +(-)^{|\nu|}\partial_{(\nu)}\partial_\mu\frac{\partial j^\mu}{\partial\phi^i_{(\nu)}} \\
  & = (-)^{|\lambda|+1}\partial_{(\mu(\lambda))}\frac{\partial
    j^\mu}{\partial\phi^i_{(\lambda)}}+(-)^{|\lambda|}\partial_{(\lambda)\mu}\frac{\partial
    j^\mu}{\partial\phi^i_{(\lambda)}}=0.
\end{split}
\end{equation}

We now turn to the proof of the other implication, namely
\begin{align}
f=\partial_\mu j^\mu && \Longleftarrow && \frac{\delta f}{\delta\phi^i}=0.
\end{align}
We can write
\begin{equation}
\begin{split}
f[x,\phi]-f[x,0] & = \int_0^1d\lambda\frac{d}{d\lambda}f(x,\lambda\phi)\\
&
=\int_0^1\frac{d\lambda}{\lambda}(\phi^i_{(\nu)}\frac{\partial}{\partial\phi^i_{(\nu)}}f)[x,\lambda\phi)].
\label{hom}
\end{split}
\end{equation}
By repeated ``integrations by parts''\footnote{There is no actual
  integration involved, what we mean is the use of Leibniz' rule as
  appropriate in the context of integrations by parts.}, we have on
the one hand that
\begin{equation}
  \label{eq:1}
  \phi^i_{(\nu)}\frac{\partial}{\partial\phi^i_{(\nu)}}f=\phi^i\frac{\delta
    f}{\delta\phi^i}+\partial_\mu \tilde k^\mu[x,\phi], 
\end{equation}
with $\tilde k^\mu[x,0]=0$. 
On the other hand, $f[x,0]=\frac{\partial}{\partial x^\mu}\xi^\mu(x)$ for some
$\xi^\mu(x)$ on account of the ordinary Poincar\'e lemma for
$\mathbb{R}^n$ (of which we will provide a proof at the end of this
lecture, as it is the most basic and prototypical instance of
a cohomological computation).
When using these relations in \eqref{hom}, together with the
assumption $\frac{\delta f}{\delta\phi^i}=0 $, we find
\begin{equation}
  \label{eq:2}
  f=\partial_\mu j^\mu,\quad
  j^\mu=\xi^\mu+\int_0^1\frac{d\lambda}{\lambda}\tilde k^\mu[x,\lambda\phi]. 
\end{equation}
\end{proof}

\begin{center}
  ----------
\end{center}

Note also that, according to \eqref{hom}, local functions can be
decomposed as $f[x,\phi]=f[x,0]+\tilde f[x,\phi]$, with
$\tilde f[x,0]=0$, and that Lemma \ref{Lemma1} holds in the space of
local functions $\tilde f$ with no field-independent part.

\section{Local functionals}

A \textbf{local functional}
is the integral, on the base manifold of the jet bundle, of a local
function evaluated at sections $\phi^i(x)$ (of compact support or
vanishing at the boundary),
\begin{equation}
F[\tilde f,\phi(x)]=\int_Md^nx\,\tilde{f}|_s.
\end{equation}
\begin{lemma}
For two local functionals $F,G$, we have
\begin{equation}
F[\tilde f,\phi(x)]=G[\tilde g,\phi(x)]
\end{equation}
$\forall \phi^i(x)$ if and only if
\begin{equation}
\tilde f=\tilde g+\partial_\mu \tilde j^\mu,
\end{equation}
which in turn is equivalent to 
\begin{equation}
\frac{\delta}{\delta\phi^i}(\tilde f-\tilde g)=0,
\end{equation}
because of
Lemma \ref{Lemma1}.
\end{lemma}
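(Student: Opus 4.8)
The statement chains together two equivalences, and the second of them---that $\tilde f = \tilde g + \partial_\mu \tilde j^\mu$ is equivalent to $\frac{\delta}{\delta\phi^i}(\tilde f - \tilde g)=0$---is precisely Lemma~\ref{Lemma1} applied to the local function $h \equiv \tilde f - \tilde g$ (which has no field-independent part, since $\tilde f[x,0]=\tilde g[x,0]=0$, so that the lemma applies in the form noted after its proof). The plan is therefore to concentrate on the first equivalence, namely that $\int_M d^nx\, h|_s = 0$ for every section $\phi^i(x)$ of compact support if and only if $h = \partial_\mu \tilde j^\mu$ for some local functions $\tilde j^\mu$.

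For the easy direction I would assume $h = \partial_\mu \tilde j^\mu$. Evaluating at a section and using property \eqref{eq:1totalder}, one finds that $(\partial_\mu \tilde j^\mu)|_s = \frac{d}{dx^\mu}(\tilde j^\mu|_s)$ is an honest total divergence on $M$, so its integral vanishes by the divergence theorem together with the compact-support (or vanishing-at-the-boundary) condition on the sections. Hence $F = G$ for all $\phi^i(x)$.

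The substantive direction is the converse, where the key tool is the first-variation formula. Replacing $\phi^i$ by $\phi^i + \e\psi^i$ with $\psi^i$ of compact support and differentiating at $\e = 0$ gives $\frac{d}{d\e}\big|_{0} \int_M d^nx\, h|_s = \int_M d^nx\, (\psi^i_{(\nu)} \frac{\partial h}{\partial \phi^i_{(\nu)}})|_s$. Applying the same integration-by-parts manipulation that produced \eqref{eq:1}, now with $\psi$ playing the role of the homogeneity field, I would rewrite the integrand as $\psi^i \frac{\delta h}{\delta\phi^i} + \partial_\mu S^\mu[\psi,\phi]$; the divergence integrates away against the boundary conditions on $\psi$, leaving $\int_M d^nx\, \psi^i(x)\, (\frac{\delta h}{\delta\phi^i})|_s$. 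Since $\phi + \e\psi$ is again an admissible section, the hypothesis $F=G$ makes this first variation vanish for every $\psi$ and every $\phi$.

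The final step, which I expect to be the main obstacle, is to upgrade this to the pointwise vanishing of the local function $\frac{\delta h}{\delta\phi^i}$ on the jet bundle. For a fixed evaluated section, the fundamental lemma of the calculus of variations (arbitrariness of the compactly supported $\psi^i$) forces $(\frac{\delta h}{\delta\phi^i})|_s(x) = 0$ for all $x$. The delicate point is that this a priori controls the Euler-Lagrange derivative only along images of sections, whereas I need it to vanish as a function of the independent jet coordinates. This is settled by observing that at any point $x_0$ and for any prescribed jet data $(\phi^i, \phi^i_\mu, \dots, \phi^i_{\mu_1\dots\mu_k})$ one can exhibit a section---for instance a polynomial in $x - x_0$ carrying exactly these Taylor coefficients---realizing those values at $x_0$. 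As $\frac{\delta h}{\delta\phi^i}$ then vanishes at every jet point, it is identically zero as a local function, and Lemma~\ref{Lemma1} delivers $h = \partial_\mu \tilde j^\mu$, closing the equivalence.
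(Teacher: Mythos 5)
Your proof is correct and follows essentially the same route as the paper: the easy direction via \eqref{eq:1totalder} and the boundary conditions, and the converse by showing that vanishing of the functional amounts to vanishing of the Euler-Lagrange derivative of the integrand, then invoking Lemma~\ref{Lemma1}. The only difference is that you spell out the step the paper states without proof (that a vanishing functional derivative forces $\frac{\delta h}{\delta\phi^i}=0$ as a local function), supplying the first-variation computation, the fundamental lemma of the calculus of variations, and the observation that every jet point is realized by a polynomial section --- a worthwhile filling-in of detail rather than a different argument.
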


\begin{center}
  ----------
\end{center}

\subsection{Exercise 2}
\label{sec:exercise-2}

Prove Lemma 2.

\begin{proof}
The implication 
\begin{align}
\tilde f=\tilde g+\partial_\mu \tilde j^\mu && \Longrightarrow &&
F[\tilde f,\phi(x)]=G[\tilde g,\phi(x)]
\end{align}
is almost obvious. In fact, we have
\begin{equation}
\begin{split}
  F[\tilde f,\phi(x)] & =\int d^nx\, \tilde f|_s=\int d^nx\, \tilde
  g|_s+\int
  d^nx\,(\partial_\mu \tilde j^\mu)|_s \\
  & = G[\tilde g,\phi(x)]+\int d^nx\,\frac{d}{dx^\mu}(j^\mu
  |_s)=G[\tilde g,\phi(x)],
\end{split}
\end{equation}
where we have used property \eqref{eq:1totalder} of total derivatives
evaluated at a section, together with the fact that we are assuming
suitable fall-off or boundary conditions on the fields.

Proving the other implication is also straightforward: if
$F[\tilde f,\phi(x)]=G[\tilde g,\phi(x)]$, we must have
\begin{equation}
\int d^nx\, (\tilde f-\tilde g)|_s=0.
\end{equation}
Since this must be true for all $\phi^i(x)$, the functional
$\int d^nx(\tilde f-\tilde g)|_s $ is a constant in $\phi(x)$. This means that its
functional derivative must vanish:
\begin{equation}
\frac{\delta}{\delta\phi^i(x)}\int d^n x(\tilde f-\tilde g)=0.
\end{equation}
But the vanishing of the functional derivative of a local functional
means the vanishing of the Euler-Lagrange derivative of its associated
local function when all boundary terms can be neglected, i.e.,
\begin{equation}
\frac{\delta \tilde f}{\delta\phi^i}=\frac{\delta \tilde g}{\delta\phi^i},
\end{equation}
which is what we wanted to prove.
\end{proof}

\begin{center}
  ----------
\end{center}

This
motivates the following algebraic definition of local functionals
\cite{Gelfand:1979zh}: a local functional is an equivalence class of
local functions
\begin{equation}
F[\tilde f,\phi(x)]\longleftrightarrow \left\{[\tilde f]:\tilde f\sim \tilde
  g\Leftrightarrow\frac{\delta}{\delta\phi^i}(\tilde f-\tilde g)=0 \right\},
\end{equation}
so that we can stop thinking about functionals and functional
analysis, and concentrate on suitable equivalence classes of local
functions instead.

\section{The Horizontal Complex}

Up to now we have mainly discussed vector fields on the jet bundle. As
in the case of ordinary differential geometry, we will now introduce
suitable differential forms. A \textbf{horizontal differential
  form}
on the jet bundle is an object
\begin{equation}
\omega=\sum_{l=0}^n\sum_{0\le\mu_1<\dots<\mu_l\leq n-1}\omega_{\mu_1\dots\mu_l}dx^{\mu_1}\dots dx^{\mu_l},
\end{equation}
where $\omega_{\mu_1\dots\mu_l} $ are local functions on the jet
bundle. We will omit the wedge products between forms and treat the
$dx$'s as anticommuting Grassmann odd variables instead. In this local
context, forms in top form degree $n$ correspond to Lagrangian
densities times the volume form. The \textbf{horizontal differential}
$d_H$ is then defined as
\begin{equation}
\begin{split}
d_H: 	& \Omega^p\longrightarrow\Omega^{p+1} \\
  		& d_H\omega^p=dx^\mu\partial_\mu\omega^p.
\end{split}
\end{equation}

As a consequence of the fact that total derivatives commute, the horizontal differential is nilpotent:
\begin{align}
[\partial_\mu,\partial_\nu]=0 &&\implies && d_H^2=0.
\end{align}
This allows us to define the cohomology of $d_H$: we define
p-\textbf{cocycles} $Z^p$ and p-\textbf{coboundaries} $B^p$ as
\begin{align}
\omega^p\in Z^p && \Longleftrightarrow && d_H\omega^p=0, \\
\omega^p\in B^p && \Longleftrightarrow && \omega^p=d_H\eta^{p-1},
\end{align}
and as usual we define the cohomology groups $H^p$ as
\begin{equation}
H^p(d_H,\Omega)\simeq Z^p/B^p.
\end{equation}
The cohomology of differential forms on $\mathbb{R}^n$ is encoded in
the Poincaré lemma, which implies that any closed form with form
degree at least 1 is exact. This theorem is generalized to the
present case, that of a trivial jet bundle, but before discussing this
generalization, let us review the ordinary Poincaré lemma for the de
Rham complex in $\mathbb{R}^n $.
\begin{theorem}(Poincaré lemma)
  The only non-trivial de Rham cohomology group of $\mathbb{R}^n $ is
  in form degree $0$:
\begin{equation}
H^p(d,\Omega)=\delta^p_0\mathbb{R}.
\end{equation}
\begin{proof}
  Let us define the operator
\begin{equation}
\rho=x^\mu\frac{\partial}{\partial dx^\mu}.
\end{equation}
Note that, if we define the vector field $v=x^\mu\frac{\partial}{\partial x^\mu}$, we have
\begin{equation}
\rho(dx^\mu)=x^\nu\frac{\partial}{\partial dx^\nu}dx^\mu=x^\mu=\iota_v(dx^\mu),
\end{equation}
so that we are in fact representing the operation of contracting
differential forms along vector fields by a suitable operator
involving the Grassmann odd variables $dx^\mu$. In this context, we
have 
\begin{equation}
  N\equiv\{d,\rho \}=\left(x^\mu\frac{\partial}{\partial x^\mu}
    +dx^\mu\frac{\partial}{\partial dx^\mu} \right).
\end{equation}
Using this, we can write for a generic $p$-form
$\omega\in\Omega^p(\mathbb{R}^n)$,
\begin{equation}\label{eq:Poinproof1}
\begin{split}
  \omega(x,dx)-\omega(0,0) & =
  \int_0^1d\lambda\frac{d}{d\lambda}\omega(\lambda x,\lambda dx) \\
 &
 =\int\frac{d\lambda}{\lambda}\left[\left(x^\mu\frac{\partial}{\partial
       x^\mu}
     +dx^\mu\frac{\partial}{\partial dx^\mu} \right)\omega \right](\lambda x,\lambda dx) \\
 & =\int_0^1\frac{d\lambda}{\lambda}\left(\{d,\rho\}\omega \right)(\lambda x,\lambda dx)
\end{split}
\end{equation}
In particular, if $d\omega=0$, equation \eqref{eq:Poinproof1} becomes
\begin{equation}
\omega(x,dx)=\omega(0,0)+d I(\omega), 
\end{equation}
where 
\begin{equation}
  \label{eq:3}
  I(\omega)=\int_0^1\frac{d\lambda}{\lambda}(\rho\omega)(\lambda
x,\lambda dx),
\end{equation}
sends $p$-forms to $p-1$ forms.
We now see that all closed forms in $\mathbb{R}^n$ are exact, except
for the constant 0-forms,
\begin{equation}
H^p(d,\Omega)=\delta^p_0\mathbb{R},
\end{equation}
which is what we wanted to prove.
\end{proof}
\end{theorem}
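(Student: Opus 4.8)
The plan is to build an explicit algebraic homotopy operator that trivialises the de Rham differential in every positive form degree, following the standard contracting-homotopy strategy adapted to the ``$dx$ as Grassmann variables'' formalism used above. The essential idea is that on the contractible space $\mathbb{R}^n$ one can use radial rescaling $x \mapsto \lambda x$ to deform any point to the origin, and that the infinitesimal generator of this rescaling is $d$-exact up to a degree-counting operator; once that is established, closedness forces exactness in positive degree, and only the constant $0$-forms survive.

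First I would introduce the contraction operator $\rho = x^\mu \frac{\partial}{\partial dx^\mu}$, which reproduces the interior product $\iota_v$ along the Euler (radial) vector field $v = x^\mu \frac{\partial}{\partial x^\mu}$. Next I would compute the graded anticommutator $N \equiv \{d, \rho\}$ and verify, by a direct application of the (anti)commutation rules for $x^\mu$, $\frac{\partial}{\partial x^\mu}$, $dx^\mu$ and $\frac{\partial}{\partial dx^\mu}$, that it equals the number operator $x^\mu \frac{\partial}{\partial x^\mu} + dx^\mu \frac{\partial}{\partial dx^\mu}$. This is nothing but Cartan's identity $\mathcal{L}_v = \{d,\iota_v\}$ realised in this formalism, and it is the algebraic heart of the argument.

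The third step is the scaling argument: for any $p$-form $\omega$ I would write $\omega(x,dx) - \omega(0,0) = \int_0^1 d\lambda\, \frac{d}{d\lambda}\omega(\lambda x, \lambda dx)$ and observe that $\frac{d}{d\lambda}\omega(\lambda x, \lambda dx) = \frac{1}{\lambda}(N\omega)(\lambda x, \lambda dx)$, since the derivative with respect to the scaling parameter acts precisely as $N$. Imposing $d\omega = 0$ turns $N\omega = d\rho\omega + \rho\, d\omega$ into $N\omega = d(\rho\omega)$, so the whole right-hand side becomes $d\, I(\omega)$ with $I(\omega) = \int_0^1 \frac{d\lambda}{\lambda}(\rho\omega)(\lambda x, \lambda dx)$, a map from $p$-forms to $(p-1)$-forms. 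Since a $p$-form with $p \geq 1$ vanishes when $dx = 0$, we have $\omega(0,0) = 0$ in positive degree, and the identity reads $\omega = d\, I(\omega)$, establishing exactness. In degree $0$, closedness means $\partial_\mu \omega = 0$, hence $\omega$ is the constant $\omega(0)$, which yields the surviving $H^0 = \mathbb{R}$.

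The step I expect to demand the most care is the convergence of the homotopy integral $\int_0^1 \frac{d\lambda}{\lambda}$ near $\lambda = 0$: the apparent $1/\lambda$ singularity is harmless precisely because $N$ annihilates the constant part $\omega(0,0)$, so after its subtraction each homogeneous piece of degree $k \geq 1$ contributes $\int_0^1 \lambda^{k-1}\, d\lambda$, which is finite. Making this regularity manifest — equivalently, checking that $\rho\omega$ evaluated along the ray carries enough powers of $\lambda$ — is the only genuinely analytic point; the remainder is the algebra of the number operator, which is routine.
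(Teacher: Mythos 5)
Your proposal is correct and follows essentially the same route as the paper: the contraction operator $\rho=x^\mu\frac{\partial}{\partial dx^\mu}$, the Cartan-type identity $\{d,\rho\}=x^\mu\frac{\partial}{\partial x^\mu}+dx^\mu\frac{\partial}{\partial dx^\mu}$, and the radial-rescaling homotopy integral $I(\omega)=\int_0^1\frac{d\lambda}{\lambda}(\rho\omega)(\lambda x,\lambda dx)$. Your added remarks on why $\omega(0,0)$ vanishes in positive degree and on the finiteness of the $\int_0^1 d\lambda/\lambda$ integral are points the paper leaves implicit, and they are correctly handled.
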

\begin{remark} 
  As a consequence of the Poincaré lemma, every top form, which is
  necessarily closed, must be exact. If Lagrangian densities would
  correspond to top forms in the de Rham complex, they all would be
  given by a divergence. Since there are Euler-Lagrange equations
  giving rise to non-trivial dynamics, the Poincaré lemma must be
  modified when we go from the de Rham to the horizontal complex.
\end{remark}

\begin{center}
  ----------
\end{center}

\subsection{Exercise 3}
\label{sec:exercise-3}

Show that $f(x)=\frac{\partial}{\partial x^\mu}\xi^\mu(x)$, with
$ \xi^\mu=\int^1_0d\lambda \lambda^{n-1} x^\mu f(\lambda x)$ by using
the Poincar\'e lemma in top form degree $n$. If $g_\mu(x)$ are $n$
functions that satisfy the integrability conditions
$\frac{\partial}{\partial x^\mu} g_\nu-\frac{\partial}{\partial x^\nu}
g_\mu=0$, show that $g_\mu=\frac{\partial}{\partial x^\mu} f(x)$ with
$f=x^\mu\int^1_0d\lambda g_\mu(\lambda x)$ by using the Poincar\'e
lemma in form degree $1$.

\begin{center}
  ----------
\end{center}

We will now state a generalization of the Poincaré lemma to
the field theoretical case, giving an idea of how the proof goes in
analogy to the ordinary Poincaré lemma.

\begin{theorem}\label{th:1cohomo}(Algebraic Poincaré lemma)
  The cohomology groups of the horizontal complex are
\begin{equation}
H^p(d_H,\Omega(E))=
\begin{cases}
\mathbb{R}, & p=0,\\
0, & 0<p<n, \\
[\omega^n], & p=n
\end{cases}
\end{equation}
where we have denoted by $[\omega^n]$, suitable equivalence classes of top
forms,
\begin{multline} [\omega^n]=\Big\{\omega\in\Omega^n(E):
    \omega\sim\omega'\Leftrightarrow\omega=\omega'+d_Hj^{n-1}\\
    \Leftrightarrow\frac{\delta}{\delta\phi^i}(\omega-\omega')=0
  \Big\}.
\end{multline}
\begin{proof}
  The idea of the proof is the same as for the ordinary Poincaré
  lemma. The analog of the number operator in the previous proof is
  the operator
\begin{equation}
\delta_Q\equiv\partial_{(\mu)}Q^i\frac{\partial}{\partial\phi^i_{(\mu)}},
\end{equation}
which commutes with the total derivative and implements an
\textbf{infinitesimal field variation}. Such a variation can be
written in a unique way as 
\begin{equation}
\delta_Qf=\sum_{|\mu|\le k}\partial_{(\mu)}\left[Q^i\frac{\delta
    f}{\delta\phi^i_{(\mu)}} \right], 
\end{equation}
where
\begin{equation}
\frac{\delta f}{\delta\phi^i_{(\mu)}}=\sum_{|\nu|\le
  k-|\mu|}\binom{|\mu|+|\nu|}{|\mu|}(-)^{|\nu|}
\partial_{(\nu)}\frac{\partial f}{\partial\phi^i_{\left((\mu)(\nu) \right)}}
\end{equation}
are higher order Euler-Lagrange derivatives. One then defines the
operator
\begin{equation}
I^p_Q(\omega^p)=\sum_{|\lambda|\le
  k-1}\frac{|\lambda|+1}{n-p+|\lambda|+1}\partial_{(\lambda)}
\left[Q^i\frac{\delta}{\delta\phi^i_{((\lambda)\rho)}}
  \frac{\partial \omega^p}{\partial dx^\rho} \right]
\end{equation}
that sends $p$-forms to $p-1$-forms. The difficult part of the proof,
which we omit here,
is to show that 
\begin{equation}
\begin{cases}
\delta_Q\omega^p=I^{p+1}_Q(d_H\omega^p)+d_H(I^p_Q\omega^p), & 0\le p < n, \\
\delta_Q\omega^n=Q^i\frac{\delta\omega^n}{\delta\phi^i}+d_H(I^n_Q\omega^n).
\end{cases}
\end{equation}
The proof proceeds then in the same way as in the ordinary Poincaré
lemma by using 
\begin{align}
\rho_H\tilde\omega^p=\int_0^\lambda\frac{d\lambda}{\lambda}(I_{H,\phi}\tilde\omega^p)[x,\lambda\phi],
  && \{d_H,\rho_H \}\tilde\omega^p=\tilde \omega^p,
\end{align}
for $p<n$. 
\end{proof}
\end{theorem}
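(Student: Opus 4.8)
The plan is to mirror the proof of the ordinary Poincaré lemma given above, with the de Rham number operator $N=\{d,\rho\}$ replaced by a field-counting operator and the contraction $\rho$ replaced by the jet-bundle homotopy $I^p_Q$. Concretely I would take $Q^i=\phi^i$, so that
\begin{equation}
\delta_\phi=\phi^i_{(\mu)}\frac{\partial}{\partial\phi^i_{(\mu)}}
\end{equation}
is the Euler operator counting the total polynomial degree in the fields and their derivatives; on a form homogeneous of degree $m\geq 1$ in the fields it acts as multiplication by $m$, which is precisely the feature that let the scaling integral $\int_0^1\frac{d\lambda}{\lambda}$ invert $N$ in the de Rham case, cf.\ \eqref{hom}.

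First I would establish the two homotopy identities quoted in the statement,
\begin{align}
\delta_\phi\omega^p&=I^{p+1}_\phi(d_H\omega^p)+d_H(I^p_\phi\omega^p), & 0\leq p<n,\\
\delta_\phi\omega^n&=\phi^i\frac{\delta\omega^n}{\delta\phi^i}+d_H(I^n_\phi\omega^n).
\end{align}
These say that, away from top degree, $\delta_\phi=\{d_H,I_\phi\}$ up to the splitting of degrees, while in top degree the Euler-Lagrange derivative appears as the obstruction, $d_H\omega^n$ being automatically zero since $\Omega^{n+1}=0$. Integrating the first identity against the scaling $\phi\mapsto\lambda\phi$ produces the contracting homotopy announced in the statement, $\{d_H,\rho_H\}\tilde\omega^p=\tilde\omega^p$ for $p<n$, where $\rho_H\tilde\omega^p=\int_0^1\frac{d\lambda}{\lambda}(I^p_\phi\tilde\omega^p)[x,\lambda\phi]$ and $\tilde\omega^p$ denotes the field-dependent part of $\omega^p$.

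With these tools in hand I would split every form as $\omega^p=\omega^p[x,0]+\tilde\omega^p$, a splitting preserved by $d_H$ because the total derivative reduces to $\partial/\partial x^\mu$ on field-independent functions. For $0<p<n$ and $d_H\omega^p=0$, the contracting homotopy gives $\tilde\omega^p=d_H(\rho_H\tilde\omega^p)$, while $\omega^p[x,0]$ is an ordinary closed de Rham $p$-form, hence $d$-exact (equivalently $d_H$-exact) by the Poincaré lemma already proved; thus $H^p=0$. For $p=0$ the homotopy forces $\delta_\phi f=0$, so a closed $0$-form has no field-dependent part and its field-independent part is $x$-independent, giving $H^0=\mathbb{R}$. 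For $p=n$ every form is closed, and feeding the second identity through the scaling integral shows that $\tilde\omega^n$ is $d_H$-exact precisely when $\frac{\delta\omega^n}{\delta\phi^i}=0$; together with the field-independent top form being $d$-exact and with Lemma \ref{Lemma1} for the converse, this identifies $H^n$ with the equivalence classes $[\omega^n]$ of the statement.

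I expect the main obstacle to be the step the authors themselves flag as difficult: verifying the two homotopy identities, and in particular pinning down the combinatorial coefficient $\frac{|\lambda|+1}{n-p+|\lambda|+1}$ in the definition of $I^p_Q$. This is a bookkeeping-intensive computation in which one repeatedly integrates by parts in the jet variables, using the commutator $\left[\frac{\partial}{\partial\phi^i_{(\mu)}},\partial_\nu\right]=\delta^{(\mu)}_{((\lambda)\nu)}\frac{\partial}{\partial\phi^i_{(\lambda)}}$ from Exercise 1 and the higher Euler-Lagrange derivatives; the coefficients must be chosen so that the resulting total-derivative terms telescope into a single $d_H$ and leave exactly the clean right-hand sides above, the $\delta\omega^n/\delta\phi^i$ term surviving only because the telescoping terminates in top degree.
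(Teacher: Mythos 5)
Your proposal is correct and follows essentially the same route as the paper: the homotopy identities for $\delta_Q$ and $I^p_Q$, specialized to $Q^i=\phi^i$ so that $\delta_\phi$ becomes the field-counting operator inverted by the scaling integral $\int_0^1\frac{d\lambda}{\lambda}$, with the field-independent part handled by the ordinary Poincar\'e lemma and the top-degree obstruction identified with the Euler-Lagrange derivative via Lemma~\ref{Lemma1}. You also correctly locate the genuinely hard step (the combinatorial verification of the two homotopy identities and of the coefficient in $I^p_Q$), which the paper likewise omits.
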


\chapter{Dynamics and the Koszul-Tate complex}

After having laid down the basic concepts, in this second lecture we
start dealing with dynamics. Noether identities and the associated
Koszul-Tate differential \cite{Fisch:1989dq,Fisch:1990rp} are
introduced in the framework, allowing us to deal with gauge theories.

More details can be found in \cite{Henneaux:1992ig} and also
in \cite{Barnich:2000zw,Barnich:2001jy}.

\section{Stationary surface}

In field theory we usually deal with partial differential
equations. In particular, we select those that arise from an action
principle because we know how to quantize them. The action is a local
functional on the jet bundle
\begin{equation}
S[L,\phi(x)]=\int_Md^nx\, L[x,\phi]|_{\phi(x)},
\end{equation}
and the equations of motion 
derived from the action principle are 
\begin{equation}\label{eq:2EL}
\frac{\delta L}{\delta\phi^i}=0.
\end{equation}
In order to obtain solutions $\bar{\phi}^i(x)$, we have to evaluate
the left hand side of \eqref{eq:2EL} at sections and find those that
solve the associated partial differential equations. For our purpose
here, it is enough to exchange this task with that of studying some
algebraic/geometric properties of hypersurfaces in the jet bundle. In
fact, for most cases of physical interest, when the Euler-Lagrange
equations do not contain derivatives of order higher than one,
equation \eqref{eq:2EL} defines a surface in the second jet space
$J^2(E)$. However, even in these cases, we need to take into account
higher derivatives of the equations of motion. More generally then, we
consider the \textbf{stationary surface}
defined in the jet-bundles $J^k(E)$ by the Euler-Lagrange equations
and an appropriate number of their total derivatives,
\begin{equation}
  \label{eq:4}
  \partial_{(\mu)}\frac{\delta L}{\delta\phi^i}=0. 
\end{equation}

A local function $f$ is \textbf{weakly zero}
if it vanishes when one pulls it back to the stationary surface. In
this case, one uses Dirac's notation and writes
\begin{equation}
f\approx 0.
\end{equation}
In fact, because of standard regularity assumptions on the functions
that define this surface, if an object vanishes weakly, then it must
be a linear combination of the equations defining the surface. This
means that
\begin{align}
  f\approx 0 && \Longleftrightarrow &&
 f=k^{i(\mu)}\partial_{(\mu)}\frac{\delta L}{\delta\phi^i}.
\end{align}

Let us work out an example in order to clarify the content of this
construction in a simple setting:
\begin{example}(Massive scalar field) Let us consider a free real
  scalar field $\phi$ with mass $m$. In this case the Lagrangian is
\begin{equation}
  L^{KG}=-(\frac{1}{2}\partial_\mu\phi\partial^\mu\phi+\frac{m^2}{2}\phi^2).
  \label{kleingordon}
\end{equation}
The equation of motion for this Lagrangian is the massive Klein-Gordon
equation
\begin{equation}
  -\frac{\delta L^{KG}}{\delta\phi}=-\partial^\mu\partial_\mu\phi+m^2\phi
  =\phi_{00}-\phi_{ii}+m^2\phi=0
\end{equation}
In our context, this field equation is an algebraic equation between
the local coordinates of the jet space $\phi$, $\phi_\mu$,
$\phi_{\mu_1\mu_2}$. 
\end{example}

\begin{center}
  ----------
\end{center}

\subsection{Exercise 4}
\label{sec:exercise-4}

Compute the Euler-Lagrange derivatives of

(i) Yang-Mills theory
\begin{equation}
  \label{eq:5}
  L^{YM}[A_\mu^\alpha]=-\frac{1}{4 g^2}
  F^\alpha_{\mu\nu} F^{\beta\mu\nu} g_{\alpha\beta},\quad
    F^\alpha_{\mu\nu}=\partial_\mu A^\alpha_\nu-\partial_\nu
    A_\mu^\alpha+f^\alpha_{\beta\gamma}A^\beta_\mu A^\gamma_\nu,
\end{equation}
where $g_{\alpha\beta}$ is an invariant symmetric tensor on a Lie algebra
$\mathfrak g$ with structure constants $f^\alpha_{\beta\gamma}$,

(ii) Chern-Simons theory in 3 spacetime dimensions, 
\begin{equation}
  \label{eq:6}
  L^{CS}[A_\mu^\alpha]=\frac{k}{8\pi}g_{\alpha\beta}\epsilon^{\mu\nu\rho} A_\mu^\alpha(\partial_\nu
  A^\beta_\rho+\frac{1}{3}f^\beta_{\gamma\delta}A_\nu^\gamma A_\rho^\delta), 
\end{equation}

(iii) Einstein gravity
\begin{equation}
  \label{eq:7}
  L^{EH}[g_{\mu\nu}]=\frac{1}{16\pi G}\sqrt{|g|} (R-2\Lambda). 
\end{equation}
Hint: Use that 
\begin{equation*}
\delta L=\delta\phi^i\frac{\delta
  L}{\delta\phi^i}+\partial_\mu(\cdot)^\mu
\end{equation*}
defines the Euler-Lagrange
equations uniquely, that 
\begin{equation*}
\delta
{\Gamma^\mu}_{\nu\lambda}=\frac{1}{2} g^{\mu\rho}(\delta
g_{\rho\nu;\lambda}+\delta g_{\rho\lambda;\nu}-\delta
g_{\nu\lambda;\rho}),
\end{equation*}
and that $\sqrt{|g|}
{j^\mu}_{;\mu}=(\sqrt{|g|} j^\mu)_{,\mu}$ implies
\begin{multline*}
  \sqrt{|g|}l^{\nu_1\dots\nu_n\mu}m_{\nu_1\dots\nu_n;\mu}
  =
  (\sqrt{|g|}l^{\nu_1\dots\nu_n\mu}m_{\nu_1\dots\nu_n})_{,\mu}\\
  -
  \sqrt{|g|}{l^{\nu_1\dots\nu_n\mu}}_{;\mu}m_{\nu_1\dots\nu_n}. 
\end{multline*}

(iv) the Poisson-Sigma model \cite{Ikeda:1993fh,Schaller:1994es} in 2
spacetime dimensions,
\begin{equation}
  \label{eq:10}
  L^{PSM}[\eta_{i\mu},X^i]d^2x =\eta_idX^i +\frac{1}{2}
  \alpha^{ij}(X)\eta_i\eta_j, 
\end{equation}
where $\eta_i=\eta_{i\mu}dx^\mu$ and $\alpha^{ij}(X)$ defines a
Poisson tensor on target space,
\begin{equation*}
  \label{eq:11}
  \alpha^{ij}=-\alpha^{ji},\quad \alpha^{il}\frac{\partial
    \alpha^{jk}}{\partial X^l}+{\rm cyclic} (i,j,k)=0,
\end{equation*}
so that the Poisson bracket 
\begin{equation*}
  \label{eq:12}
  \{f(X),g(X)\}=\frac{\partial f}{\partial
    X^i}\alpha^{ij}\frac{\partial g}{\partial X^j}
\end{equation*}
is skew symmetric, satisfies the Leibniz rule and the Jacobi identity. 

\section{Noether identities}
\label{sec:noether-identities}

\textbf{Noether identities}
are relations between the equations determining the stationary
surface:
\begin{equation}
  N^{i(\mu)}\partial_{(\mu)}\frac{\delta L}{\delta\phi^i}
  \equiv N^i\left[\frac{\delta L}{\delta\phi^i} \right]=
  0.\label{eq:211} 
\end{equation}
Here $N^{i(\mu)}$ are local functions. Let us consider an illustrative
example, which may appear a little unusual:
\begin{example}
  Consider several scalar fields satisfying the Klein-Gordon
  equation. In this case, we have a set of Noether identities given by
\begin{equation}
N^i=\frac{\delta L}{\delta\phi^j}\mu^{[ji]},
\end{equation}
where $\mu^{[ji]}$ are a collection of local functions that are
antisymmetric in their indices:  
\begin{equation}
\mu^{[ij]}\frac{\delta L}{\delta\phi^i}\frac{\delta L}{\delta\phi^j}=0. 
\end{equation} 
Note that the presence of Noether identities is not due to the fact
that we are considering multiple scalar fields: in the case of a
single field, in fact, we have for example a Noether identity
determined by
\begin{equation}
N=\partial_\nu\left(\frac{\delta L}{\delta\phi} \right)M^{[\nu\mu]}\partial_\mu.
\end{equation}
\end{example}
The ones we have just shown are examples of \textbf{trivial Noether
  identities}: a trivial Noether identity is a Noether identity with
coefficients vanishing on-shell, i.e. characterized by
\begin{equation}
N^{i(\mu)}\partial_{(\mu)}\approx 0.
\end{equation}

A \textbf{proper gauge theory}
is defined as a theory for which there are nontrivial Noether
identities. For example, in electromagnetism,
\begin{equation}
L^{Max}(A_\mu)=-\frac{1}{4}F_{\mu\nu}F^{\mu\nu},\quad \frac{\delta L^{Max}}{\delta
  A_\mu}=\partial_\nu F^{\nu\mu}, \label{maxwell}
\end{equation}
we have
\begin{equation}
\partial_\mu\partial_\nu F^{\mu\nu}=0.
\end{equation}
with corresponding nontrivial Noether identity
\begin{equation}
 N^{(\mu)\nu}\partial_{(\mu)}=\eta^{\mu\nu}\partial_\mu\ne0.
\end{equation}
Other examples are Yang-Mills and Chern-Simons theories for which the
Noether identities take the form
\begin{equation}
  D_\mu\frac{\delta L^{YM; CS}}{\delta A^\alpha_\mu}=0,
\end{equation}
where $D_\mu f^\alpha=\partial_\mu f^\alpha+A^\beta_\mu
f^\alpha_{\beta\gamma} f^\gamma$
(adjoint
representation), while $D_\mu g_\alpha=\partial_\mu g_\alpha-A^\beta_\mu f^\gamma_{\beta\alpha}
g_\gamma$ (co-adjoint representation). 

\begin{center}
  ----------
\end{center}

\subsection{Exercise 5}
\label{sec:exercise-5}

In the case of Einstein gravity, show that the Noether identities
\begin{equation}
  \label{eq:8}
  \partial_\nu \frac{\delta L^{EH}}{\delta
    g_{\mu\nu}}+\Gamma^\mu_{\nu\lambda}
  \frac{\delta L^{EH}}{\delta g_{\nu\lambda}}=0
  \end{equation}
  are equivalent to the contracted Bianchi identities
  \begin{equation}
    \label{eq:9}
    {G^{\mu\nu}}_{;\nu}=0. 
  \end{equation}

  \section{Irreducible gauge theories}
\label{sec:irred-gauge-theor}

  One typical problem in field theory is to find all the Noether
  identities of a given theory, a problem which can be naturally
  tackled in this framework. In particular, we say that we have an
  \textbf{irreducible gauge theory} if we have a set of Noether
  identities $\{R_\alpha^{\dagger i(\mu)}\partial_{(\mu)} \}$ such that 
\begin{equation}
  \label{Noether}
  R^{\dagger i}_\alpha\left[\frac{\delta L}{\delta\phi^i} \right]=0,
\end{equation}
which is
\begin{itemize}
\item \textbf{Non-trivial}:
\begin{equation}\label{eq:2nontrivial}
R_\alpha^{\dagger i(\mu)}\not\approx0;
\end{equation}
\item \textbf{Irreducible}: this means that there is no combination of
  the Noether identities and their derivatives which is trivial:
\begin{align}\label{eq:2irreduciba}
  Z^{\dagger\alpha(\nu)}\partial_{(\nu)}\circ
  R_\alpha^{\dagger i(\mu)}\partial_{(\mu)}
  \approx 0 && \implies && Z^{\dagger \alpha(\beta)}=0;
\end{align}
\item Constitutes a \textbf{generating set}: 
  this means that any Noether operator,

  \noindent $N^{i(\mu)}\partial_{(\mu)}$
  satisfying \eqref{eq:211}, may be written in the form
\begin{equation}
  N^{i(\lambda)}\partial_{(\lambda)}=Z^{\dagger \alpha(\nu)}\partial_{(\nu)}\circ 
  R^{\dagger i(\mu)}_\alpha\partial_{(\mu)}+M^{[j(\nu)i(\lambda)]}\partial_{(\nu)}
  \frac{\delta L}{\delta\phi^i}\partial_{(\lambda)},\label{eq:2irreducibb}
\end{equation}
i.e., as a sum of two terms, one given by the Noether identities of the
generating set and the other vanishing on-shell (or, more precisely,
an antisymmetric combination of the left hand sides of the equations
of motion and their derivatives). 
\end{itemize}

To find such a generating set is in general hard. In the examples
above, a slightly tedious analysis of the left hand sides of the
equations of motion of the theory and their derivatives in jet space
shows that the standard ones given above constitute such a generating
set. As we will see below, such an analysis is in fact equivalent to
showing that one has found all the non-trivial gauge symmetries of the
theory.

\section{The Koszul-Tate complex}

We have already introduced a nilpotent operator, the horizontal
differential $d_H$, together with its cohomological complex in order
to have algebraic control on functionals. Another differential, the
\textbf{Koszul-Tate differential} $\delta$, and its cohomology will
take care of the dynamics of the theory. In the case of irreducible
gauge theories, it is constructed as follows.

For each equation of motion $\frac{\delta L}{\delta\phi^i}$, one
introduces a Grassmann odd field $\phi^*_{i}$, whereas for each
non-trivial Noether operator of the generating set
$R^{+i(\mu)}_\alpha\partial_{(\mu)}$ (if any), one introduces a
Grassmann even field $C^*_{\alpha}$.  The definition of the total
derivative operator $\partial_\mu$, of local functions, the horizontal
complex and local functionals is then extended so as to include (a
polynomial dependence of) these additional fields and their
derivatives. In terms of generators, the Koszul-Tate differential
may then by defined through
$\delta\phi^i=0=\delta x^\mu=\delta dx^\mu$, while
\begin{equation}
\begin{cases}
\delta\phi_i^*=\frac{\delta L}{\delta\phi^i}, \\
\delta C_\alpha^*=R^{\dagger i}_\alpha[\phi_i^*],
\end{cases}
\end{equation}
with the understanding that 
\begin{equation}
[\delta,\partial_\mu]=0.\label{eq:13}
\end{equation}

Equivalently, it may be represented by the odd vector field
\begin{equation}
  \delta=\partial_{(\mu)}\frac{\delta L}{\delta\phi^i}
  \frac{\partial}{\partial\phi^*_{i(\mu)}}+\partial_{(\nu)}
  \left(R_\alpha^{\dagger i}[\phi_i^*] \right)\frac{\partial}{\partial C^*_{\alpha(\nu)}}.
\end{equation}
Further, assuming that all odd variables anticommute between
themselves (so that, for example, $dx^\mu$'s anticommute with
$\phi^*_i$'s), \eqref{eq:13} becomes
\begin{equation}
\{\delta,d_H \}=0.
\end{equation}
This vector field is nilpotent,
\begin{equation}
\delta^2=\frac{1}{2}\{\delta,\delta\}=0,
\end{equation}
on account of \eqref{Noether}. 
 
The introduction of the additional fields
provides an additional $\mathbb{Z}$-grading of the extended horizontal
complex, the \textbf{antifield number}, which can be described by the
vector field
\begin{equation}
  \mathcal{A}=\phi^*_{i(\mu)}\frac{\partial}{\partial\phi^*_{i(\mu)}}
  +2C^*_{\alpha(\beta)}\frac{\partial}{\partial C^*_{\alpha(\beta)}}.
\end{equation}
Applied to a field, this operator gives back the same field multiplied
by its antifield number: so, for example,
\begin{align}
\mathcal{A}\phi^i=0, && \mathcal{A}\phi_i^{*2}=2\phi_i^{*2}.
\end{align}
If $\Omega_k$ denote the horizontal forms with definite antifield
number $k$,
then
\begin{equation}
\delta:\Omega_k\longrightarrow\Omega_{k-1}.
\end{equation}
Since $\delta$ reduces rather than
increases the degree, one refers to it as boundary rather than a
co-boundary operator. Since it is nilpotent, one may define its
homology. This is the object of the following:
\begin{theorem}
\begin{equation}\label{eq:KTCoho}
H_k(\delta,\Omega)=
\begin{cases}
0, & k>0, \\
\Omega(\Sigma)\simeq\Omega_0/I, & k=0
\end{cases}
\end{equation}
where we have defined the set of forms that vanish after the pull-back
to $\Sigma$ as
\begin{equation}
I=\{\omega\in \Omega:\omega\approx 0 \}.
\end{equation}
\end{theorem}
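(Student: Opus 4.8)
The theorem asserts that the Koszul-Tate homology is concentrated in antifield degree zero, where it computes functions on the stationary surface. The plan is to prove the two cases separately. For $k=0$, the statement $H_0(\delta,\Omega)\simeq\Omega_0/I$ is essentially a definition-chasing argument: $\Omega_0$ consists of forms with no antifields, so $\ker\delta|_{\Omega_0}=\Omega_0$ (since $\delta$ annihilates all the original fields $\phi^i$, $x^\mu$, $dx^\mu$), while the image $\delta(\Omega_1)$ is spanned by terms of the form $\partial_{(\mu)}\frac{\delta L}{\delta\phi^i}$ times antifield-zero factors, which is precisely the ideal $I$ of forms vanishing weakly. So I would verify that $\delta\Omega_1 = I$ using the regularity characterization $f\approx 0\Leftrightarrow f=k^{i(\mu)}\partial_{(\mu)}\frac{\delta L}{\delta\phi^i}$ stated earlier in the excerpt, giving $H_0=\Omega_0/I\simeq\Omega(\Sigma)$.

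\textbf{Acyclicity in positive degree.} The real content is $H_k(\delta,\Omega)=0$ for $k>0$. The plan is to construct a contracting homotopy $\sigma$ with $\{\delta,\sigma\}=\mathcal{A}$ on the subcomplex of positive antifield number, so that any $\delta$-closed $\omega$ with $\mathcal{A}\omega=k\omega$, $k>0$, satisfies $k\omega=\delta\sigma\omega+\sigma\delta\omega=\delta(\sigma\omega)$, hence is exact. A natural candidate acts by stripping off one equation-of-motion antifield or Noether antifield and replacing it, reversing the action of $\delta$; for instance one tries $\sigma=\phi^*_{i(\mu)}\,(\text{inverse of }\partial_{(\mu)}\tfrac{\delta L}{\delta\phi^i})\,\frac{\partial}{\partial(\cdots)}$ built so that $\delta\sigma$ counts the $\phi^*$-degree. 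I expect the main obstacle to lie exactly here: the operator $\delta$ acts through the total derivatives $\partial_{(\mu)}\frac{\delta L}{\delta\phi^i}$, which is not pointwise invertible, so the homotopy cannot be written down naively. One must instead exploit the regularity assumption that the $\partial_{(\mu)}\frac{\delta L}{\delta\phi^i}$ together with a complementary set of coordinates form a new coordinate system on jet space, allowing a change of variables in which $\delta$ becomes the standard Koszul differential on a free graded-commutative algebra over $\Omega(\Sigma)$.

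\textbf{Reduction to the Koszul resolution.} Concretely, I would split the jet coordinates into those adapted to $\Sigma$ and those transverse to it, so that the $\phi^*_{i(\mu)}$ pair with the transverse (off-shell) coordinates and $\delta$ acts as $\delta\phi^*_{i(\mu)}=(\text{transverse coordinate})$. On this normal form the complex is a tensor product of two-step Koszul complexes $(\text{coordinate})\xleftarrow{\delta}(\text{antifield})$, each of which is acyclic in positive degree by the standard Koszul argument, and acyclicity is preserved under tensor products and the passage to the free resolution including the $C^*_\alpha$. The role of the $C^*_\alpha$ is to kill the homology in antifield degree one that would otherwise arise from the nontrivial Noether identities \eqref{Noether}: because the generating set is irreducible and generating in the sense of \eqref{eq:2nontrivial}--\eqref{eq:2irreducibb}, introducing one $C^*_\alpha$ per identity exactly cancels the relations among the $\delta\phi^*_i$, and irreducibility guarantees no further antifields (at higher ghost generations) are needed. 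The technical heart is showing this cancellation is complete, which is where the generating-set and irreducibility hypotheses are indispensable and which I would treat as the crux of the argument.
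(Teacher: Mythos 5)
Your two-part strategy is the standard one, and it is worth noting at the outset that the paper itself supplies no proof of \eqref{eq:KTCoho}: it only records, in a remark, that properties \eqref{eq:2irreduciba} and \eqref{eq:2irreducibb} of the generating set are the crucial inputs for acyclicity, and defers the details to the cited literature. Measured against that literature, your degree-zero argument is complete and correct: $\delta$ annihilates everything in $\Omega_0$, and $\delta\Omega_1$ consists exactly of the combinations $k^{i(\mu)}\partial_{(\mu)}\frac{\delta L}{\delta\phi^i}$, which by the regularity assumption is precisely the ideal $I$ of weakly vanishing forms, so $H_0=\Omega_0/I\simeq\Omega(\Sigma)$.

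For $k>0$ your plan is the right one, and you correctly diagnose where the difficulty sits: a naive homotopy built from $\phi^*_{i(\mu)}\partial/\partial(\cdots)$ fails because the $\partial_{(\mu)}\frac{\delta L}{\delta\phi^i}$ are neither invertible nor independent. The missing refinement is that the regularity assumption lets you split these left-hand sides into an \emph{independent} subset, which can be completed to a coordinate system transverse to $\Sigma$, and a \emph{dependent} subset, which is expressed through the former via the Noether identities \eqref{Noether}. The antifields dual to the independent equations then form contractible pairs with the transverse coordinates (killing their contribution by the elementary Koszul argument you invoke), while the combinations of $\phi^*_{i(\mu)}$ dual to the dependent equations are paired against the $C^*_{\alpha(\nu)}$; the generating property \eqref{eq:2irreducibb} guarantees that $\delta C^*_\alpha=R^{\dagger i}_\alpha[\phi^*_i]$ exhausts $H_1$, and irreducibility \eqref{eq:2irreduciba} guarantees that no new cycles are created in antifield number $2$ and above, so the tower of antifields terminates. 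You name this cancellation as the crux but do not carry it out, which is an honest and accurate assessment; since the paper itself declines to execute this step, your proposal reproduces exactly the level of detail the lecture notes intend, with the correct mechanism identified.
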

\begin{remark}
  A convenient way of describing functions (or horizontal forms) on a
  surface embedded in a larger space is to consider the functions
  defined on the whole space modulo those that vanish on the
  surface. It is in this sense that one says that the construction
  above provides a homological resolution of the horizontal complex
  $\Omega(\Sigma)$ associated to the stationary surface.
\end{remark}
\begin{remark}
  Crucial in the proof of the first line of the theorem (also referred
  to as ``acyclicity of $\delta$ in higher antifield number'') are
  properties \eqref{eq:2irreduciba} and \eqref{eq:2irreducibb} of the
  generating set. For reducible gauge theories (for instance theories
  involving $p-$ forms with $p>1$), the construction has to be
  suitably modified in order for the theorem to continue to hold.
\end{remark}

\section{Characteristic cohomology }
\label{sec:char-cohom-}

Characteristic cohomology
$H^p(d_H,\Omega(\Sigma))$ contains important physical
information related to the equations of motion.

In the extended
complex, it is encoded through
\begin{equation}\label{eq:2equivHom}
H^p(d_H,\Omega(\Sigma))\simeq H_0^p(d_H|\delta).
\end{equation}
Indeed, consider first a weakly closed $p$-form with zero antifield
number
\begin{equation}
d_H\omega_0^p\approx 0.
\end{equation}
Since any (weakly) vanishing function is a linear combination of the
equations of motion and their total derivatives, we can write
\begin{equation}
d_H\omega_0^p=k^{i(\mu)} \partial_{(\mu)}\frac{\delta L}{\delta\phi^i}.
\end{equation}
This in turn is equivalent to
\begin{equation}
d_H\omega_0^p+\delta\omega_1^{p+1}=0,
\end{equation}
where
\begin{equation}
	\omega_1=-k^{i(\mu)}\phi_{i(\mu)}^*.
\end{equation}
For the coboundary condition, every weakly exact form
\begin{equation}
\omega_0^p\approx d_H\eta_0^{p-1},
\end{equation}
for the same reason as above, can be written as
\begin{equation}
  \omega_0^p=d_H\eta_0^{p-1}+\delta\eta_1^p.\label{trivial}
\end{equation}

$H^{n-1}(d_H,\Omega(\Sigma))$, the characteristic cohomology in degree
$n-1$, classifies the non-trivial conserved currents
of the theory. Indeed, an $n-1$-form of zero antifield number can be
written as
\begin{equation}
\omega_0^{n-1}=j^\mu\left(d^{n-1}x \right)_\mu, 
\end{equation}
with
\begin{equation}
  \label{eq:15}
  \left(d^{n-k}x\right)_{\mu_1\dots\mu_k}
 =\frac{1}{k!(n-k)!}\epsilon_{\mu_1\dots\mu_{k}\nu_{k+1}\dots\nu_n}dx^{\nu_{k+2}}\dots
 dx^{\nu_n}. 
\end{equation}
This implies that
\begin{align}
d_H\omega_0^{n-1}\approx 0, && \Leftrightarrow && \partial_\mu j^\mu\approx 0.
\end{align}
Similarly,
\begin{equation}
  \eta_0^{n-2}=k^{[\mu\nu]}\left(d^{n-2}x\right)_{\mu\nu}, \label{eq:14}
\end{equation}
so that a trivial conserved current as in \eqref{trivial} is
explicitly given by 
\begin{equation}
j^\mu=\partial_\nu k^{[\mu\nu]}+t^\mu,
\end{equation}
with $t^\mu\approx 0$. In other words, trivial conserved currents are
given by divergences of ``superpotentials'' $k^{[\mu\nu]} $ and by
on-shell vanishing ones.

\begin{example}(Energy-momentum tensor)
For a scalar field with Lagrangian given in \eqref{kleingordon},
the energy-momentum tensor 
\begin{equation}\label{eq:2scalarEMT}
T{^\mu}_\nu=\partial^\mu\phi\partial_\nu\phi-\delta{^\mu}_\nu L^{KG},
\end{equation}
provides 4 non-trivial conserved currents since one may easily check
that

\noindent $\partial_\mu T{^\mu}_\nu\approx 0$. The same holds for
electromagnetism with Lagrangian given in \eqref{maxwell} for which
\begin{equation}\label{eq:2EMEMT}
T{^\mu}_\nu=F^{\mu\sigma}\partial_\nu A_\sigma-\delta{^\mu}{_\nu}L^{Max}.
\end{equation}
\end{example}
\begin{remark}
  A concrete expression for a non-trivial conserved current is merely
  a representative of an equivalence class defined up to the addition
  of trivial conserved currents. This is important for instance when
  one wants to couple matter fields to gravity in the context of
  general relativity, where a symmetric energy momentum-tensor is
  required so that it can be contracted with the metric. This is the
  case for the energy-momentum tensor $T^{\mu\nu}$ of the scalar field
  but not for the one of electromagnetism given in
  \eqref{eq:2EMEMT}. In this context, the \textbf{Belinfante
    procedure} consists in constructing a symmetric representative of
  the energy-momentum tensor by using trivial conserved currents. 
\end{remark}

\begin{center}
  ----------
\end{center}

\subsection{Exercise 6}
\label{sec:exercise-6}
Show that the energy-momentum tensor \eqref{eq:2EMEMT} differs from
the symmetric expression 
\begin{equation}
T{^\mu}_\nu=-\left(F^{\mu\sigma}F_{\sigma\nu}+\frac{1}{4}F^{\rho\sigma}F_{\rho\sigma}\delta{^\mu}_\nu \right),
\end{equation}
by trivial terms.

\begin{center}
  ----------
\end{center}

For theories
with no non-trivial Noether identities, like scalar field theory for
example, there is no additional characteristic cohomology besides the
conserved currents in form degree $n-1$,
\begin{align}
H^p\left(d_H,\Omega(\Sigma) \right)=0, && p<n-1.
\end{align}

For proper irreducible gauge theories, it can be shown that there is
no characteristic cohomology in form degree $p<n-2$. The only
additional characteristic cohomology is thus in form degree
$p=n-2$. The conditions that a form be closed but not exact become in
this case
\begin{align}
\partial_\nu k^{[\mu\nu]}\approx 0, &&
 k^{\mu\nu}\not\approx\partial_\sigma\eta^{[\mu\nu\sigma]}. 
\end{align}
In electromagnetism, we will show that there is a single class with
representative given by $k^{\mu\nu}= F^{\mu\nu}$, which indeed satisfies
\begin{equation}
\partial_\nu F^{\mu\nu}\approx 0.
\end{equation}
This cohomology class captures electric charge contained in a closed
surface $S$ through  
\begin{equation}
Q=\oint_{t=cte,S} F^{\mu\nu}(d^{2}x)_{\mu\nu}=\int_{t=cte,S} F^{0i}d\sigma_i.
\end{equation}
In linearized gravity, ADM charges fall into this class.

The proofs of these statements, as well as of the generalized Noether
theorems to be discussed in the next section, rely on the following
isomorphisms between cohomology classes:
\begin{theorem}[inversion \& descent equations]\label{th4}
  \begin{equation}
    \label{eq:16}
    H_0^{n-p}(d_H|\delta)/\delta^n_p \mathbb R\simeq
    H_1^{n-p+1}(\delta|d_H)\simeq \dots\simeq H^n_p(\delta|d_H), 
  \end{equation}
\end{theorem}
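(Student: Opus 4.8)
The plan is to establish the entire chain by the standard descent (``zig-zag'') argument in the bicomplex graded by form degree $p$ and antifield number $k$, on which $d_H$ raises $p$, $\delta$ lowers $k$, and $\{d_H,\delta\}=0$. The two structural inputs are the acyclicity of $\delta$ in positive antifield number (first line of \eqref{eq:KTCoho}) and the algebraic Poincaré lemma (Theorem \ref{th:1cohomo}), which makes $H^p(d_H)$ vanish for $0<p<n$ and equal $\mathbb R$ at $p=0$.

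First I would set up the inversion isomorphism $H_0^{n-p}(d_H|\delta)\simeq H_1^{n-p+1}(\delta|d_H)$. A cocycle of the left-hand group is a form $\omega_0$ of antifield number $0$ and form degree $n-p$ satisfying
\begin{equation}
d_H\omega_0+\delta\omega_1=0
\end{equation}
for some $\omega_1$ of antifield number $1$ and form degree $n-p+1$. Reading the same relation as $\delta\omega_1+d_H\omega_0=0$ exhibits $\omega_1$ as a cocycle of $H_1^{n-p+1}(\delta|d_H)$, and the assignment $[\omega_0]\mapsto[\omega_1]$ is the inversion map. I would then check it is well defined and injective: modifying $\omega_0$ by $d_H\mu+\delta\nu$, or changing the auxiliary $\omega_1$, alters $[\omega_1]$ only by a $(\delta|d_H)$-coboundary, the ambiguity of the lift being controlled by the acyclicity of $\delta$.

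The successive descent isomorphisms $H_k^{n-p+k}(\delta|d_H)\simeq H_{k+1}^{n-p+k+1}(\delta|d_H)$ then follow by iterating one step. Given $d_H\omega_k+\delta\omega_{k+1}=0$, applying $d_H$ and using $\{d_H,\delta\}=0$ yields $\delta(d_H\omega_{k+1})=0$; since $d_H\omega_{k+1}$ has antifield number $k+1\ge 1$, acyclicity of $\delta$ produces $\omega_{k+2}$ with $d_H\omega_{k+1}+\delta\omega_{k+2}=0$. This builds the tower
\begin{equation}
d_H\omega_k+\delta\omega_{k+1}=0,\qquad k=0,1,\dots,p-1,
\end{equation}
which terminates at form degree $n$ with $\omega_p$ obeying $\delta\omega_p+d_H\omega_{p-1}=0$, the top cocycle of $H_p^n(\delta|d_H)$. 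For the inverse (ascent) direction one lifts a top-degree $(\delta|d_H)$-cocycle back down one form degree at a time, and here the vanishing $H^p(d_H)=0$ for $0<p<n$ is precisely what guarantees each lift exists and is unique modulo coboundaries, so the two constructions are mutually inverse isomorphisms.

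The hard part will be the bookkeeping at the two ends of the tower, and in particular pinning down the origin of the correction $\delta^n_p\mathbb R$. When $p=n$ the chain reaches $H_0^0(d_H|\delta)$ at form degree $0$, where the constant $0$-forms live; these are simultaneously $d_H$- and $\delta$-closed, hence genuine cocycles, but they sit at the trivial bottom of the descent (their auxiliary $\omega_1$ can be taken to vanish) and so map to zero, forcing them to be quotiented out. This is exactly the $\mathbb R=H^0(d_H)$ of the Poincaré lemma. Verifying that this constant is the \emph{only} failure of bijectivity, i.e. that away from the $p=0$ end of the bicomplex the combination of the two (co)homology theorems makes every zig-zag step an isomorphism, is the delicate point; the remainder is the routine sign- and degree-tracking of the descent.
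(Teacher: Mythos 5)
Your argument is correct and is precisely the elementary diagram chase the paper invokes: the text gives no further details, stating only that the result ``can be proved by elementary homological techniques of diagram chasing'', and your zig-zag construction, resting on acyclicity of $\delta$ in positive antifield number together with the algebraic Poincar\'e lemma, is the standard way to carry it out. You also correctly locate the $\delta^n_p\mathbb{R}$ correction as the constant $0$-forms, which are nontrivial in $H^0_0(d_H|\delta)$ but are annihilated by the inversion map.
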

\noindent which in turn can be proved by elementary homological techniques of
``diagram chasing''.

\chapter{Symmetries and longitudinal differential}

Up to now we have discussed Noether identities and conservation laws
from the perspective of the geometry and topology of the
surface of the equations of motion. For a stationary surface
originating from a variational principle, we will now relate the
former to gauge and the latter to global symmetries in terms of
complete and generalized Noether theorems. 

\section{Prolongation of transformations}

Given a section
\begin{equation}
  (x^\mu,\phi^i(x))
\end{equation}
of the jet bundle, a generalized vector field on $J^0(E)$ 
\begin{equation}
v=a^\mu\frac{\partial}{\partial x^\mu}+b^i\frac{\partial}{\partial\phi^i},
\end{equation}
where $a^\mu$ and $b^i$ are local functions, 
defines an infinitesimal transformation of a section, as
\begin{equation}\label{eq:3inftransf}
\left(x^\mu+\e a^\mu|_s,\phi^i(x)+\e b^i|_s \right).
\end{equation}
The first term thus encodes the action of the transformation on the
spacetime coordinates, while the second term encodes the variation of
the field. The transformation then produces the new section
\begin{equation}
\phi'^i(x+\e a)=\phi^i(x)+\e b^i|_s,
\end{equation}
so that the variation of the field will have not only the intrinsic
term $b^i$, but also a drag term including its derivative
\begin{align}
\delta_Q\phi^i\equiv Q^i, && Q^i=\left(b^i-a^\nu\phi^i_\nu \right).
\end{align}
$Q^i$ is called the \textbf{characteristic representative} of the
transformation. Through the characteristic representative we can
define the \textbf{prolongation} of the transformation from the fields
to their derivatives, i.e. to the infinite jet space, in terms of a
so-called {\bf evolutionary vector field},
\begin{equation}
\delta_Q=\partial_{(\mu)}Q^i\frac{\partial}{\partial\phi^i_{(\mu)}}.
\end{equation}
The prolongation is defined in such a way that the transformation
commutes with total derivatives:
\begin{equation}\label{eq:3variationder}
[\delta_Q,\partial_\mu]=0.
\end{equation}

\section{Symmetries of the equations of motion}
\label{sec:symm-equat-moti}

When studying field theories,
we can encounter two types of symmetries: symmetries of the
equations of motion and symmetries of the action. A symmetry of the
equations of motion is defined by
\begin{align}
\delta_Q\frac{\delta L}{\delta\phi^i}\approx 0.
\end{align}
In particular, this is equivalent to saying that given a section
$\bar{\phi}^i$ solving the Euler-Lagrange equations, then the
transformed section
\begin{equation}
\bar{\phi}^i(x)+\epsilon Q^i|_{\phi^i(x)}
\end{equation}
is a solution of the Euler-Lagrange equations to order $\epsilon^2$.

\section{Variational symmetries}
\label{sec:symmetries-action}

Symmetries of the action,
also called \textbf{variational symmetries} in this algebraic context,
are defined by
\begin{equation}
\delta_Q L
= \partial_\mu k^\mu_Q,
\end{equation}
for some local functions $k^\mu_Q$.  Notice that since we have defined
Lagrangians as equivalence classes up to total derivatives this
definition does not depend on the representative. We can rearrange the
above condition, using
\begin{equation}
\delta_QL=\partial_{(\mu)}Q^i\frac{\partial L}{\partial\phi^i_{(\mu)}}
\end{equation}
together with the definition of Euler-Lagrange derivative
\eqref{eq:1ELDer} in order to write
\begin{equation}
  Q^i\frac{\delta L}{\delta\phi^i}
  =\partial_\mu\left(k^\mu_Q-\frac{\partial L}{\partial\phi^i_\mu}Q^i
    +\dots \right)\equiv\partial_\mu j^\mu_Q,{}
\end{equation}
which is the usual statement of Noether's first theorem (to wit, that
a variational symmetry $Q^i$ gives rise to a conserved current
$j^\mu_Q$; we will return to that later).

A question one is led to ask is whether every variational symmetry 
is also a symmetry of the equations of motion and vice versa. In fact,
only the direct implication is true. Indeed, for general local
functions $f,g,Q^i$, one can show that
\begin{equation}
  \label{eq:17}
  (-)^{|\mu|}\partial_{(\mu)}\left[\frac{\partial
      (\partial_\nu f)}{\partial\phi^i_{(\mu)}} g
     \right]=-(-)^{|\mu|}\partial_{(\mu)}\left[\frac{\partial
       f}{\partial\phi^i_{(\mu)}} \partial_\nu g
     \right],
\end{equation}
\begin{equation}
  \delta_Q\frac{\delta f}{\delta\phi^i}=\frac{\delta}{\delta\phi^i}
  (\delta_Q f)-(-)^{|\mu|}\partial_{(\mu)}\left[\frac{\partial
      Q^j}{\partial\phi^i_{(\mu)}}
    \frac{\delta f}{\delta\phi^j} \right].\label{314}
\end{equation}
Applying the latter equation to the case $f=L$, with variational
$Q^i$, the first term on the right-hand side vanishes since the
Euler-Lagrange derivative of a total derivative is zero, 
\begin{equation}
\frac{\delta}{\delta\phi^i}(\partial_\mu k^\mu_Q)=0.
\end{equation}
The second term, on the other hand, is weakly zero since it is
proportional to the equations of motion and their derivatives. Thus we
find
\begin{equation}
\delta_Q\frac{\delta L}{\delta\phi^i}\approx 0
\end{equation}
i.e., that $Q^i$ is a symmetry of the equations of motion.

In order to see that the other implication does not hold in general,
it suffices to see a simple counterexample. For the massless
Klein-Gordon Lagrangian,
\begin{align}
L=-\frac{1}{2}(\partial_\mu\phi)^2,
\end{align}
the field equation is determined by 
\begin{equation}
\frac{\delta L}{\delta\phi}=\Box\phi.
\end{equation}
An equations of motion symmetry is given by 
\begin{equation}
Q=\lambda\phi,
\end{equation}
with $\lambda$ constant. However,
\begin{equation}
\delta_QL=2\lambda L,
\end{equation}
which cannot be a total derivative since the Euler-Lagrange equations
of motion are non-trivial, while the Euler-Lagrange derivative would
annihilate any total derivative.
\begin{remark}
Non-Noetherian symmetries, i.e.,
symmetries of variational equations which are not variational
symmetries, do play an important role in the context of integrable
systems.
\end{remark}
\begin{example}(Translations) For a translation in the direction of
  $x^\nu$, we have
\begin{align}
a^\mu_\nu=\delta^\mu_\nu, && b^i_\nu=0.
\end{align}
Equation \eqref{eq:3inftransf} then implies that 
\begin{equation}
Q^i_\nu=-\phi^i_\nu,
\end{equation}
with prolongation given by
\begin{equation}
  \delta_{Q_\nu}=-\phi^i_{((\mu)\nu)}\frac{\partial}{\partial\phi^i_{(\mu)}}
  =-\partial_\nu+\frac{\partial}{\partial x^\nu}.
\end{equation}
We see in particular that a Lagrangian with no explicit dependence on
spacetime coordinates is translation invariant,
\begin{align}
\frac{\partial L}{\partial x^\nu}=0 && \Longrightarrow &&
\delta_{Q_\nu}L=-\partial_\nu L=\partial_\mu( -L\delta^\mu_\nu),
\end{align}
with associated Noether current
$j^\mu_\nu=\frac{\partial L}{\phi^i_\mu}\phi^i_\nu-\delta^\mu_\nu
L\equiv {T^\mu}_\nu$, the (canonical) energy-momentum tensor. 
\end{example}

\begin{center}
  ----------
\end{center}

\subsection{Exercise 7}
\label{sec:exercise-7}

Show that $\delta_{\xi}\phi=\xi^\nu\partial_\nu\phi$ (i) is a
variational symmetry of the Klein-Gordon Lagrangian if (i) $\xi^\nu$ is a
Killing vector of flat space; (ii) $\xi^\nu$ is a conformal Killing vector in 2
spacetime dimensions in the massless case. Show that
$\delta_\xi A^a_\mu =\xi^\nu\partial_\nu A_\mu^a+\partial_\mu \xi^\nu
A_\nu^a$ is a variational symmetry of the Yang-Mills Lagrangian if (i)
$\xi^\nu$ is a Killing vector of flat space; (ii) $\xi^\nu$ is a
conformal Killing vector of flat space in 4 spacetime dimensions.

Hint: Contract the equation defining the energy-momentum tensor by
$\xi^\nu$ and conclude using the properties of the latter.

\begin{center}
  ----------
\end{center}

An
important property of evolutionary vector fields is that they form an
algebra under the commutator of vector vector fields,
\begin{equation}
\begin{split}
  \left[\delta_{Q_1},\delta_{Q_2} \right] &
  =\partial_{(\mu)}\left(\delta_{Q_1}Q_2^i-\delta_{Q_2}Q_1^i \right)
  \frac{\partial}{\partial\phi^i_{(\mu)}}\\
 & \equiv \partial_{(\mu)}[Q_1,Q_2]^i\frac{\partial}{\partial\phi^i_{(\mu)}},
\end{split}
\end{equation}
where we have used \eqref{eq:3variationder} repeatedly. 

Both equations of motion symmetries and variational symmetries form a
sub-algebra thereof. The latter follows because
\begin{equation}
\begin{split}
  [\delta_{Q_1},\delta_{Q_2}]L & =\delta_{Q_1}\partial_\mu k^\mu_{Q_2}
  -\delta_{Q_2}\partial_\mu k^\mu_{Q_1}  \\
 & =\partial_\mu\left(\delta_{Q_1} k^\mu_{Q_2} -\delta_{Q_2}
   k^\mu_{Q_1} \right) . 
\end{split}
\end{equation}

\begin{center}
  ----------
\end{center}

\subsection{Exercise 8}
\label{sec:exercise-8}

Let $j^\mu_Q$ be the Noether current associated to a variational
symmetry $Q^i$ with $j_Q=j^\mu (d^{n-1}x)_\mu$ the associated $n-1$
form. Define the (covariant) Dickey bracket through
\begin{equation}
  \label{eq:30}
  \{j_{Q_1},j_{Q_2}\}=\delta_{Q_1} j_{Q_2}.
\end{equation}
Prove that
\begin{equation}
  \label{eq:31}
  \{j_{Q_1},j_{Q_2}\}=j_{[Q_1,Q_2]}+{\rm trivial},
\end{equation}
where trivial includes constants in spacetime dimension $1$.
Hint: apply $\delta_{Q_1}$ to $dj_{Q_2}=Q^i_2\frac{\delta
  L}{\delta \phi^i}d^nx$ and use \eqref{314}.

\begin{center}
  ----------
\end{center}

\section{Gauge Symmetries}

A gauge symmetry
corresponds to an (infinite-dimensional) sub-set of variational
symmetries that depend on an arbitrary local function $f$ and its
derivatives. It is defined by
\begin{align}
\delta_f\phi^i=Q^i(f)=Q^{i(\mu)}\partial_{(\mu)}f, && \delta_f L=\partial_\mu k^\mu_f.
\end{align}
The most simple example that comes to mind is that of electromagnetism, where
\begin{equation}
\delta_f A_\mu =\partial_\mu f.
\end{equation}
Gauge symmetries can also be characterized through the important
\begin{theorem}(Noether's second theorem) There is a one-to-one
  correspondence between Noether identities and gauge symmetries.
\begin{proof}
  Let us first prove that we can associate a Noether identity to a
  gauge symmetry. Indeed, we have
\begin{equation}\label{eq:3Noether2}
Q^i(f)\frac{\delta
  L}{\delta\phi^i}
=\partial_\mu j^\mu_f.
\end{equation}
By doing multiple integration by parts, we can rewrite this equation
as
\begin{equation}
(-)^{|\mu|}f\partial_{(\mu)}\left[Q^{i(\mu)}\frac{\delta
    L}{\delta\phi^i}
\right]\equiv f Q^{\dagger i}\left[\frac{\delta
    L}{\delta\phi^i}\right]
=\partial_\mu\left(j^\mu_f-t^\mu_f \right),\label{eq:4Noether2}
\end{equation}
where $t^\mu_f$ vanishes on-shell, and we have defined the
\textbf{Noether operator} $Q^{\dagger i} $, which is the formal
adjoint of $Q^{i(\mu)}\partial_{(\mu)}$. In particular, since $f$ is
arbitrary, we can replace it by a new independent field on the
jet-space. This allows one to take the Euler-Lagrange derivative with
respect to $f$. Since the
Euler-Lagrange derivative annihilates total derivatives, we thus
arrive at the Noether identity
\begin{equation}
Q^{\dagger i}\left(\frac{\delta L}{\delta\phi^i} \right)=0.\label{eq:NI}
\end{equation}

To prove the converse, let us start by multiplying a Noether identity
by an arbitrary local function $f$, 
\begin{equation}
fN^i\left(\frac{\delta L}{\delta\phi^i} \right)=0.
\end{equation}
It is then again only a matter of multiple integrations by parts to
arrive at an equation of the form \eqref{eq:3Noether2}, with
$Q^i(f)=N^{\dagger i}(f)$, and $j^\mu_f=t^{\prime\mu}_f\approx 0$. 
\end{proof}
\end{theorem}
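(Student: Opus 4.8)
The plan is to prove the correspondence by exhibiting explicit maps in both directions, each built from the formal-adjoint operation, and then to observe that this operation is an involution up to on-shell terms. For the direction associating a Noether identity to a gauge symmetry, I would start from the variational-symmetry condition $\delta_f L = \partial_\mu k^\mu_f$ and invoke Noether's first theorem (established earlier via the Euler-Lagrange decomposition of $\delta_Q L$) to write $Q^i(f)\frac{\delta L}{\delta\phi^i} = \partial_\mu j^\mu_f$. Substituting $Q^i(f) = Q^{i(\mu)}\partial_{(\mu)}f$ and integrating by parts repeatedly, I would strip all derivatives off the arbitrary function $f$, producing $f\, Q^{\dagger i}\!\left[\frac{\delta L}{\delta\phi^i}\right] + \partial_\mu t^\mu_f$ on the left, where $Q^{\dagger i} = (-)^{|\mu|}\partial_{(\mu)}\circ Q^{i(\mu)}$ is the formal adjoint and $t^\mu_f$ is the accumulated boundary term, which is weakly vanishing since it is proportional to the equations of motion and their derivatives.

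The key move is then to promote $f$ to an independent field on the jet space and apply $\frac{\delta}{\delta f}$ to the resulting identity $f\, Q^{\dagger i}\!\left[\frac{\delta L}{\delta\phi^i}\right] = \partial_\mu(j^\mu_f - t^\mu_f)$. By Lemma~\ref{Lemma1} the Euler-Lagrange derivative annihilates the total divergence on the right, while on the left $f$ enters linearly and undifferentiated, so that $\frac{\delta}{\delta f}$ simply returns $Q^{\dagger i}\!\left[\frac{\delta L}{\delta\phi^i}\right]$. This yields the desired Noether identity $Q^{\dagger i}\!\left[\frac{\delta L}{\delta\phi^i}\right] = 0$.

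For the converse, I would run the same computation in reverse: multiply a given Noether identity $N^i\!\left[\frac{\delta L}{\delta\phi^i}\right]=0$ by an arbitrary $f$ and integrate by parts to transfer the derivatives from the coefficients onto $f$. Defining the candidate gauge transformation by $Q^i(f) := N^{\dagger i}(f)$, the same Leibniz manipulations produce $Q^i(f)\frac{\delta L}{\delta\phi^i} = \partial_\mu t^{\prime\mu}_f$ with $t^{\prime\mu}_f \approx 0$; exhibiting $Q^i(f)\frac{\delta L}{\delta\phi^i}$ as a total divergence is precisely the statement that $Q^i(f)$ is a variational symmetry depending on the arbitrary function $f$, i.e. a gauge symmetry.

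The part I expect to be most delicate is establishing that these two assignments are genuinely inverse to one another, rather than merely well-defined in each direction. This reduces to the fact that the formal adjoint is an involution, $(Q^{\dagger})^{\dagger} = Q$, which holds exactly at the level of differential operators but only up to total derivatives and on-shell contributions once one tracks the boundary terms $t^\mu_f$. Making the bijection precise therefore requires working modulo trivial Noether identities and trivial gauge symmetries, i.e. those with on-shell-vanishing characteristics, so that the cleanest formulation of ``one-to-one'' is an isomorphism of the corresponding equivalence classes.
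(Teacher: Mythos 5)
Your proof follows essentially the same route as the paper: the formal adjoint $Q^{\dagger i}=(-)^{|\mu|}\partial_{(\mu)}\circ Q^{i(\mu)}$ obtained by repeated integrations by parts, the promotion of $f$ to an independent jet-space field so that the Euler-Lagrange derivative with respect to $f$ annihilates the total divergence and extracts the Noether identity, and the reverse construction $Q^i(f)=N^{\dagger i}(f)$ with weakly vanishing current for the converse. Your closing observation that the correspondence is only a genuine bijection modulo trivial Noether identities and on-shell-vanishing gauge characteristics is correct and is exactly the point taken up in the paper's remark immediately following the theorem.
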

\begin{remark}
Let us note also that, when combining \eqref{eq:4Noether2} with the
Noether identity \eqref{eq:NI}, 
\begin{align}
\partial_\mu\left(j^\mu_f-t^\mu_f \right)=0, && \implies && d_H(j_f-t_f)=0,
\end{align}
so that $j_f-t_f$ is a closed $n-1$-form.
However, we know from Theorem \ref{th:1cohomo} that
\begin{equation}
H^{n-1}(d_H)=\delta^{n-1}_0\mathbb R.
\end{equation}
This means that
\begin{equation}
j_f=t_f+d_H\eta^{(n-2)}+\delta^{n-1}_0C,
\end{equation}
which is the sum of a term which vanishes on-shell and an exact term,
up to a constant $C$ in 1 spacetime dimension, i.e., in classical
mechanics. We thus obtain as a corollary the result that \textbf{the
  Noether current associated to a gauge symmetry is always trivial}.
\end{remark}
\begin{remark}
  Because all theories, including for instance scalar field theories,
  admit trivial Noether identities, they also admit gauge
  symmetries. Note however that to a trivial Noether identity
  corresponds a trivial gauge symmetry in the sense that the
  characteristic $Q^i[f]\approx 0$. More generally, again as a
  corrollary, any gauge symmetry in an irreducible gauge theory can be
  written in terms of a generating set
  as
  \begin{equation}
    \label{eq:18}
    Q^i(f)=R^i_\alpha\left(Z^\alpha(f)\right)+(-)^{|\mu|}\partial_{(\mu)} \left(
    M^{[j(\nu)i(\mu)]}\partial_{(\nu)} \frac{\delta L}{\delta
      \phi^j}f\right). 
  \end{equation}
\end{remark}

\begin{center}
  ----------
\end{center}

\subsection{Exercise 9}
\label{sec:exercise-9}

Study the global symmetries and the conserved currents of the first
order Hamiltonian action
\begin{equation}
  \label{eq:32}
  S_H=\int dt [p_i\dot q^i-H(q,p,t)].
\end{equation}
Hint: Use the fact that one can show that every symmetry that vanishes on-shell is
trivial, i.e., is an antisymmetric combination of the equations of
motion and thus a trivial gauge symmetry. The Hamiltonian
equations of motion can then be used to remove all time derivatives in
the characteristic of global symmetries, so that one may assume
\begin{equation}
  \label{eq:33}
  \delta q^i=Q^i(q,p,t),\quad \delta p_i=P_i(q,p,t). 
\end{equation}
With this starting point, write the out the condition that such
symmetry is a variational symmetry in Noether form
\begin{equation}
  \label{eq:34}
  Z^A\frac{\delta L_H}{\delta z^A}=\frac{d}{dt} j,\quad  z^A=(q^i,p_i),
\end{equation}
by identifying time-derivatives to conclude that, up to constants,
conserved currents are determined by $j(z^A,t)$, such that
\begin{equation}
  \label{eq:36}
  \frac{\partial}{\partial t} j+\{j,H\}=0,
\end{equation}
and the associated global symmetries are given by the Hamitlonian
vector field generated by $j$, 
\begin{equation}
  \label{eq:35}
  Z^A=\omega^{BA}\frac{\partial j}{\partial z^B},\quad
  \omega^{AB}=\begin{pmatrix} 0 & \delta^i_l\\ -\delta^k_j & 0 
  \end{pmatrix}
\end{equation}
Show that the associated Dickey bracket is the Poisson bracket.

\begin{center}
  ----------
\end{center}

\section{Generalized Noether theorems}
\label{sec:gener-noeth-theor}

The tools developed so far also allow us to give a more complete
version of Noether's first theorem than can be found in Noether's
original paper or in field theory
textbooks. 
\begin{theorem}(Noether's first theorem)
  There is a one-to-one correspondence between non-trivial
  Noether symmetries and non-trivial Noether currents, which we can write as
\begin{equation}
[Q^i]\longleftrightarrow[j^{n-1}],
\end{equation}
where the equivalence classes $[Q^i]$ are defined by
\begin{equation}
  Q^i\sim Q^i+R{^i}_\alpha(f^\alpha)+(-)^{|\mu|}\partial_{(\mu)}
  \left[M^{[j(\nu)i(\mu)]}\partial_{(\nu)}\frac{\delta L}{\delta\phi^i} \right],
\end{equation}
while $j^{n-1}\sim j^{n-1}+ t^{n-1}+d_H \eta^{n-2}+\delta^{n-1} C$
with $t^{n-2}\approx 0$. 
\end{theorem}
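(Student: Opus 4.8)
The plan is to realize both sides of the asserted correspondence as one and the same relative cohomology group, and then to read the bijection off the associated descent equation. Non-trivial conserved currents are by definition classes in the characteristic cohomology $H^{n-1}(d_H,\Omega(\Sigma))$: the weak-closedness $\partial_\mu j^\mu\approx 0$ and the triviality $j^\mu=\partial_\nu k^{[\mu\nu]}+t^\mu$ with $t^\mu\approx 0$ are precisely the cocycle and coboundary conditions defining that group, together with the constant ambiguity $\delta^{n-1}C$ that survives only in spacetime dimension one. By \eqref{eq:2equivHom} this equals $H_0^{n-1}(d_H|\delta)$, and Theorem \ref{th4} applied at $p=1$ then supplies the isomorphism
\[
H_0^{n-1}(d_H|\delta)/\delta^n_1\mathbb{R}\simeq H^n_1(\delta|d_H),
\]
so non-trivial currents are identified with the antifield-number-one, top-form relative cohomology of $\delta$ modulo $d_H$.

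The heart of the argument is then to identify $H^n_1(\delta|d_H)$ with the classes $[Q^i]$ of variational symmetries. First I would observe that, modulo $d_H$-exact terms, any antifield-number-one top form $a^{i(\mu)}\phi^*_{i(\mu)}\,d^nx$ can be brought to the shape $\omega_1^n=Q^i\phi^*_i\,d^nx$ by stripping the derivatives off the antifields, with $Q^i=(-)^{|\mu|}\partial_{(\mu)}a^{i(\mu)}$ the characteristic; this is where a single representative $Q^i$ enters. Since $\delta Q^i=0$ and $\delta\phi^*_i=\frac{\delta L}{\delta\phi^i}$, one finds $\delta\omega_1^n=Q^i\frac{\delta L}{\delta\phi^i}\,d^nx$, so that the cocycle condition $\delta\omega_1^n+d_H\omega_0^{n-1}=0$ is exactly $Q^i\frac{\delta L}{\delta\phi^i}=\partial_\mu j^\mu$, i.e. Noether's first theorem: $Q^i$ is a variational symmetry and $\omega_0^{n-1}$ is its current. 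The descent equation itself is thus the explicit incarnation of the correspondence $[Q^i]\leftrightarrow[j^{n-1}]$.

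It then remains to match the coboundaries $\omega_1^n=\delta\eta_2^n+d_H\eta_1^{n-1}$ with the stated equivalence on $[Q^i]$. The $d_H$-exact piece is the integration-by-parts freedom already spent on the representative, while $\eta_2^n$ has antifield number two and so splits into a part linear in the $C^*_\alpha$ and a part quadratic in the $\phi^*_i$. Applying $\delta$ to the first, using $\delta C^*_\alpha=R^{\dagger i}_\alpha[\phi^*_i]$, reproduces the gauge contribution $R^i_\alpha(f^\alpha)$; applying it to the second, using $\delta\phi^*_i=\frac{\delta L}{\delta\phi^i}$ together with the Grassmann oddness of the antifields, reproduces the antisymmetric on-shell-trivial term $(-)^{|\mu|}\partial_{(\mu)}[M^{[j(\nu)i(\mu)]}\partial_{(\nu)}\frac{\delta L}{\delta\phi^i}]$. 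Chaining the isomorphisms yields the bijection, with the constant $C$ in dimension one accounting for the quotient by $\delta^n_1\mathbb{R}$.

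I expect the genuine obstacle to lie in this last step: proving that these two families of coboundaries \emph{exhaust}, and are in bijection with, all trivial symmetries. This rests on the acyclicity of the Koszul-Tate differential in higher antifield number, \eqref{eq:KTCoho}, which guarantees that no further independent $\delta$-exact terms survive, and on the generating-set properties \eqref{eq:2irreduciba}--\eqref{eq:2irreducibb}, which ensure that every gauge symmetry is captured by the $R^i_\alpha$ and every on-shell-vanishing symmetry by the antisymmetric $M$-combination. The sign bookkeeping for the odd variables $\phi^*_i$ and $dx^\mu$ in the quadratic term is the only delicate part of what is otherwise a formal diagram chase.
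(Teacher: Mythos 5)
Your proposal is correct and follows essentially the same route as the paper: invoking Theorem \ref{th4} to identify $H_0^{n-1}(d_H|\delta)$ (the characteristic cohomology classifying currents) with $H_1^n(\delta|d_H)$, choosing the canonical representatives $\omega_1^n=\phi^*_iQ^i\,d^nx$ and $\omega_2^n$ linear in $C^*_\alpha$ plus quadratic in $\phi^*_i$, reading Noether's identity $Q^i\frac{\delta L}{\delta\phi^i}=\partial_\mu j^\mu$ off the cocycle condition, and extracting the equivalence relation on $Q^i$ from the coboundary condition. The paper's only additional technical remark is that the final identification of trivial symmetries is obtained by taking an Euler-Lagrange derivative of the coboundary equation with respect to $\phi^*_i$, which is the concrete implementation of the exhaustiveness step you correctly flag as the delicate point.
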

\begin{proof}
The proof follows by spelling out the details of Theorem \ref{th4} in
  form degree $p=n-1$,
\begin{equation}
H_1^n(\delta|d_H)\simeq H_0^{n-1}(d_H|\delta)/\delta^{n-1}\mathbb R. 
\end{equation}
Indeed, the RHS has already been treated in the context of
characteristic cohomology and its Koszul-Tate resolution. For the
LHS, we note that in maximal form degree $n$, by uniquely fixing the
$d_H$-exact term in the coboundary condition, canonical
representatives in antifield number $1$ and $2$ are given by
\begin{equation}
  \label{eq:19}
  \begin{split}
  & \omega^n_1=\phi^*_i Q^i d^nx,\\ & \omega_2^n=\left(f^\alpha C^*_\alpha
  +\frac{1}{2}(-)^{|\mu|}\partial_{(\mu)}[
  M^{j(\nu)[i(\mu)]}\phi^*_{j(\nu)}]\phi^*_i\right)d^nx, 
\end{split}
\end{equation}
with $Q^i,f^\alpha, M^{[j(\nu)i(\mu)]}$ local functions. 
For such a representative in antifield number $1$,
the cocycle condition $\delta\omega_1^n+d_H\omega_0^{n-1}=0$ reduces 
to the condition that $Q^i$ defines a variational symmetry,
\begin{equation}
  \label{eq:20}
  Q^i\frac{\delta L}{\delta \phi^i}=\partial_\mu j^\mu.
\end{equation}
For the coboundary condition $\omega_1^n=\delta\eta_2^n+d_H(\cdot)$,
one may assume that $\eta_2^n$ is of the form of $\omega_2^n$ above by
suitably adjusting the $d_H$-exact term. Taking an Euler-Lagrange
derivative of the resulting equation with respect to $\phi^*_i$ then
leads to
$Q^i=R^i_\alpha(f^\alpha)+(-)^{|\mu|}\partial_{(\mu)}[
M^{[i(\mu)j(\nu)]}\frac{\delta L}{\delta\phi^j}]$.
\end{proof}
\begin{remark}
  From the viewpoint of Noether's first theorem, all gauge symmetries
  should be considered as trivial; physically distinct ``global''
  symmetries are thus described by equivalence classes of variational
  symetries, up to gauge symmetries.
\end{remark}

Under suitable assumptions, one may then also show
that
\begin{equation}
H_p^n(\delta|d_H)=0,\quad p>2,\label{eq:22}
\end{equation}
in irreducible gauge theories. Theorem \ref{th4} then allows one to
conclude that there is no characteristic cohomology, up to
constants in spacetime dimensions $p$.

There remains the case of $p=2$,
\begin{equation}
H_2^n(\delta|d_H)\simeq H^{n-2}_0(d_H|\delta)/\delta^{n-2}_0\mathbb R, 
\end{equation}
By a similar, but more involved, reasoning to that used for $p=1$, it
can be shown that the computation of characteristic cohomology in
degree $n-2$ (up to constants in spacetime dimensions $2$) reduces to
the problem of finding physically distinct {\bf global reducibility
  parameters},
that is to say
equivalence classes of $[f^\alpha]$'s such that 
\begin{equation}
  \label{eq:21}
  R^i_\alpha(f^\alpha)\approx 0,\quad f^\alpha\sim
  f^\alpha+t^\alpha,\quad t^\alpha\approx 0.  
\end{equation}

In a large class of theories, such as Yang-Mills theories or general
relativity in spacetime dimensions greater or equal to three, one may
show that the equivalence classes of $f^\alpha$'s are determined by
field independent ordinary functions $\bar f^\alpha(x)$ satisfying the
strong equality
\begin{equation}
  R^i_\alpha(\bar f^\alpha)=0. \label{eq:3gaugepar}
\end{equation}
In this case, like in the case of Noether's first theorem, there is an
explicit formula for the associated ``surface
charges'',
that is to say representatives
for the corresponding characteristic cohomology. Indeed, by repeated
integrations by part, one has
\begin{equation}
  R^i_\alpha(f^\alpha)\frac{\delta L}{\delta\phi^i}
  =f^\alpha R^{\dagger i}_\alpha\left(\frac{\delta L}{\delta\phi^i}
  \right)+\partial_\mu S^{\mu i}_f\left(\frac{\delta
    L}{\delta\phi^i}\right). \label{ncg} 
\end{equation}
The first term on the right-hand side is zero because of the Noether
identities. This means that
$S^{\mu i}_f(\frac{\delta L}{\delta\phi^i})$ represent the weakly
vanishing Noether currents associated to the gauge symmetries
$R^i_\alpha(f^\alpha)$, that can readily be worked out by keeping the
boundary terms in \eqref{ncg}.  When using field independent
reducibility parameters $\bar f^\alpha$ satisfying
\eqref{eq:3gaugepar}, the right hand side vanishes as well. It follows
that the $n-1$ form
$S_{\bar f}=S^{\mu i}_f[\frac{\delta L}{\delta\phi^i}](d^{n-1}x)_\mu$
satisfies
\begin{equation}
d_H S_{\bar f}=0. 
\end{equation}
Because the horizontal cohomology of degree $n-1$ is trivial
(cf.~Theorem \ref{th:1cohomo}), this means that
\begin{equation}
(0\approx) S_{\bar{f}}=d_Hk_{\bar{f}},\quad k_{\bar f}=\rho_H(S_{\bar
  f}). 
\end{equation}
In other words, all linearly independent surface charges can be explicitly
constructed by applying the contracting homotopy of the horizontal
complex to the weakly vanishing Noether current associated with
linearly independent solutions of \eqref{eq:3gaugepar}.

In the general case, where one has a basis $[f_A^\alpha]$ of possibly
field dependent global reducibility parameters, one may show that
$\rho_H(S_{f_A})$ still provide a basis of characteristic cohomology
in degree $n-2$ (up to constants in spacetime dimension $2$).

Working out explicit expressions for the surface charges is direct but
quite lenghty in the case of second order field equations, that is why
we will not do so here.
\begin{example}
  Let us give some examples of field-independent reducibility
  parameters. In the case of electromagnetism, where
\begin{equation}
\delta_f A_\mu=\partial_\mu f, 
\end{equation}
we find
\begin{align}
\partial_\mu\bar{f}=0, && \Longrightarrow && \bar{f}=const,
\end{align}
which characterizes the electric charge.

For the case of Yang-Mills and Chern-Simons theory based on a
sem-simple Lie group, and of
General Relativity, condition \eqref{eq:3gaugepar} reads
\begin{align}
D_\mu \bar{f}=0, && \mathcal{L}_{\bar{\xi}}g_{\mu\nu}=0,
\end{align}
both of which admit only the trivial solutions
$\bar{f}^\alpha=\bar{\xi}^\mu=0$ because the equations need to hold for
arbitrary gauge potentials or metrics.

In the case of Yang-Mills theory or gravity linearized around a
background solution $\bar g_{\mu\nu}$ or $\bar A$, a generating set of
gauge transformations is given by
\begin{align}
\delta_f a_\mu =D^{\bar A}_\mu
f, && \delta h_{\mu\nu}= \mathcal{L}_{{\xi}} \bar
g_{\mu\nu}\label{eq:23}. 
\end{align}
In case the background solution is flat for instance, $\bar
A=g^{-1}dg$, $\bar g_{\mu\nu}=\eta_{\mu\nu}$, the solutions to
equation \eqref{eq:3gaugepar} are given by
\begin{equation}
  \label{eq:24}
  \bar f=\lambda^\alpha g^{-1} T_\alpha g,\quad \bar
  \xi_\mu=a_\mu+\omega_{[\mu\nu]} x^\nu, 
\end{equation}
and the associated surface charges are related to color charges,
respectively the ADM charges in general relativity
\cite{Abbott:1981ff,Abbott:1982jh}.
\end{example}
\begin{remark}
  The above results on surface charges associated to global
  reducibility parameters can be extended to the case of {\bf
    asymptotic symmetries} and the associated current algebras.
\end{remark}

\section{Gauge symmetry algebra}
\label{sec:gauge-symm-algebra}

From their definitions, it follows with little work that gauge
symmetries form an ideal in the algebra of variational symmetries,
while on-shell vanishing gauge symmetries in turn form an ideal in the
sub-algebra of gauge symmetries.
The latter may thus be quotiented away and the
resulting quotient is the algebra of non-trivial gauge symmetries we
are interested in.

The information on non-trivial gauge symmetries is contained in the
generating set. We may start by considering gauge symmetries of the
form $\delta_\epsilon\phi^i=R^i_\alpha(\epsilon^\alpha)$ with
$\epsilon^\alpha(x)$ arbitrary functions. The commutator
$[\delta_{\epsilon_1},\delta_{\epsilon_2}]$ is a variational symmetry
that depends on arbitrary functions. Up to on-shell vanishing gauge
symmetries, it may thus be written in terms of the generating set,
\begin{equation}\label{eq:3generating}
  \delta_{\e_1}R^i_\alpha(\e_2^\alpha)-\delta_{\e_2}R^i_\alpha(\e_1^\alpha)
  \approx R^i_\gamma\left(f^\gamma_{\a\b}(\e^\alpha_1,\e^\beta_2)\right),
\end{equation}
for some bi-differential, skew-symmetric operators ``structure operators''
\begin{equation}
f^{\gamma(\mu)((\nu)}_{\a\b}\partial_{(\mu)}\epsilon^\a_1
\partial_{(\nu)}\epsilon^\b_2\label{eq:25}.
\end{equation}
From the identity
\begin{equation}
\left[\delta_{\e_1},[\delta_{\e_2},\delta_{\e_3}]\right]+{\rm cyclic}
(1,2,3)=0,\label{eq:53}
\end{equation}
where ``cyclic'' stands for cyclic permutations, and the
irreducibility assumption for the generating set, it then follows that
the structure operators satisfy the generalized Jacobi identities,
\begin{equation}\label{eq:3Jacobi}
  \delta_{\e_1}f^\c_{\a\b}(\e_2^\a,\e_3^\b)+
  f^\c_{\d\rho}(\e_1^\d,f^\rho_{\a\b}(\e_2^\a,\e_3^\b))+{\rm
  cyclic}(1,2,3)\approx 
0. 
\end{equation}
This algebraic structure of vector fields in involution may be
captured through a suitable differential.
Instead of the arbitrary functions
$\epsilon^\alpha(x)$ one introduces so-called ghosts, that is to say
Grassmann odd fields $C^\alpha(x)$, that are promoted to additional
coordinates in the fiber of the jet-bundle. Defining the action on
generators as
\begin{equation}
  \label{eq:26}
  \gamma\phi^i =R^i_\alpha(C^\alpha),\quad \gamma
  C^\gamma=-\frac{1}{2} f^\gamma_{\alpha\beta}(C^\alpha,C^\beta),
\end{equation}
and extending to the jet-bundle as 
\begin{equation}
  \label{eq:27}
  \gamma=\partial_{(\mu)}\left(R^i_\alpha(C^\alpha)\right)\frac{\partial}{\partial
    \phi^i_{(\mu)}}- \frac{1}{2}
  \partial_{(\mu)}\left(f^\gamma_{\alpha\beta}(C^\alpha,C^\beta)\right)
  \frac{\partial}{\partial
    C^\alpha_{(\mu)}}, 
\end{equation}
it follows from \eqref{eq:3generating} and \eqref{eq:3Jacobi} that
\begin{equation}
  \label{eq:28}
  \gamma^2\approx 0. 
\end{equation}
\begin{remark}
  When using general non-trivial gauge symmetries with arbitrary local
  function $f^\alpha$ instead of arbitrary functions
  $\epsilon^\alpha$, equation \eqref{eq:3generating} becomes
  \begin{equation}
    \begin{split}
    & \delta_{f_1}R^i_\alpha(f_2^\alpha)-\delta_{f_2}R^i_\alpha(f_1^\alpha)
  \approx
  R^i_\gamma\left([f_1,f_2]^\gamma_A\right),\\ & [f_1,f_2]^\gamma_A
  =f^\gamma_{\a\b}(f^\alpha_1,f^\beta_2)+\delta_{f_1}f^\gamma_2
  -\delta_{f_2}f^\gamma_3, \
\end{split}
\end{equation}
while \eqref{eq:3Jacobi} turns into the statement that, on-shell, the
bracket $[f_1,f_2]_A$ satisfies the Jacobi identity. On-shell, the
algebraic structure that emerges in this way for irreducible gauge
theories is that of a Lie algebroid (compare for instance to section
2.1 of \cite{LojaFernandes2006}).
\end{remark}

\begin{center}
  ----------
\end{center}

\subsection{Exercise 10}
\label{sec:exercise-10}

Show that 
\begin{equation}
  \delta_\epsilon X^i=-  \alpha^{ij}\epsilon_j,\quad
\delta_\epsilon \eta_{i\mu}=\partial_\mu \epsilon_i+\partial_i
\alpha^{jk}\eta_{j\mu}\epsilon_k,\label{eq:38}
\end{equation}
are gauge symmetries of the Poisson-Sigma model. Work out the
structure functions and show that the gauge algebra is open. Work out
the coefficient of the on-shell vanishing terms. 

\begin{center}
  ----------
\end{center}

\chapter{Batalin-Vilkovisky formalism} 
\label{ch:Lec4}

The (original) aim of the \textbf{Batalin-Vilkovisky} (BV)
\textbf{formalism} is to control gauge invariance during the
perturbative quantization of gauge theories. It builds on the methods
of Faddeev and Popov, Slavnov, Taylor, Zinn-Justin,
Becchi-Rouet-Stora, and Tyutin designed for Yang-Mills type theories
with closed algebras involving structure constants (see
\cite{Piguet:1995er} for a thorough review), and extends them to
generic gauge theories with field-dependent structure operators and
open algebras. Besides \cite{Henneaux:1992ig} and the references in
the introduction of \cite{Barnich:2000zw}, useful reviews include
\cite{Gomis:1995he} and \cite{Weinberg:1996kr}, section 15.9.

\section{Motivation}

Up to now
we have dealt with classical gauge field theories. If one is
interested in the perturbative quantization of field theories, the
main formula, encoding the Feynman rules for Green's functions, is 
\begin{equation}
  \frac{\mathcal{Z}[J]}{\mathcal{Z}_0[0]}  =\frac{\int
    [d\phi]e^{\frac{i}{\hbar}\left(S_0[\phi]+S_I[\phi]+J_A\phi^A
      \right)}}
  {\int[d\phi]e^{\frac{i}{\hbar}S_0[\phi]}} 
 = e^{\frac{i}{\hbar}S_I\left[\frac{\hbar}{i}\frac{\delta}{\delta J}
   \right]}e^{\frac{i}{2\hbar}J_A(\mathcal{D}^{-1})^{AB}J_B}
\end{equation}
Here
\begin{equation}
  S_0=-\frac{1}{2} \phi^A {\mathcal D}_{AB}\phi^B,
\end{equation}
is the quadratic part of the Larangian, including
$i\epsilon$ terms so that a unique inverse $(\mathcal{D}^{-1})^{AB}$
exists, while $S_I$ encodes cubic and higher order vertices. In the
case of gauge theories, the non-trivial aspects we have treated at the
classical level have direct counterparts at the quantum level: a
consequence of non-trivial gauge symmetries and Noether identities is
that the quadratic kernel of the action is no longer invertible.
\begin{remark}
  Note that in this notation (due to DeWitt), the index $A=(i,x^\mu)$
  includes the spacetime (or momentum) dependence, summation over $A$
  then includes a spacetime (or momentum) integral and $\delta^A_B$
  includes Dirac deltas.
\end{remark}

\begin{center}
  ----------
\end{center}

\subsection{Exercise 11}
\label{sec:exercise-11}

In the case of the Lagrangian for free electromagnetism, show that the
non-invertibility of the quadratic kernel in momentum space is
directly related to gauge invariance. Hint: What is the eigenvector
of eigenvalue zero of the quadratic kernel ?

\begin{center}
  ----------
\end{center}

This implies that we cannot directly define a propagator.  In order to
perform perturbative calculations, we then have to gauge-fix the
system in such a way that the quadratic kernel becomes
invertible. However, if one gauge-fixes the theory in a pedestrian way
one looses all information about the original gauge invariance of the
system: how can we be sure then that a different gauge-fixing would
not give us different physical results?

The (original) aim of the \textbf{Batalin-Vilkovisky} (BV)
\textbf{formalism} is to make the quadratic kernel invertible while
retaining the consequences of gauge invariance. 

\section{BV antibracket, master action and BRST differential}
\label{sec:bv-antibr-mast}

The core of the formalism
is the introduction of an anti-canonical structure. This is possible
because during our classical treatment we have introduced, besides the
original fields $\phi^i$, ghosts $C^\alpha$ for an irreducible
generating set of gauge symmetries. This extended set of fields is
denoted by $\Phi^a=(\phi^i,C^\alpha)$. Associated antifields
$\Phi^*_a=(\phi^*_i,C^*_\alpha)$ have been introduced for the
Koszul-Tate resolution of the equations of motion. It is then possible
to consider fields and antifields as conjugated variables through an
odd graded Lie bracket called the \textbf{antibracket}. The
antibracket is defined, for two arbitrary functionals $F,G$, as
\begin{equation}
(F,G)=\int
d^nx\,\left(\frac{\delta^RF}{\delta\Phi^a(x)}
  \frac{\delta^LG}{\delta\Phi_a^*(x)}
-\frac{\delta^RF}{\delta\Phi_a^*(x)}\frac{\delta^LG}{\delta\Phi^a(x)}\right),
\end{equation}
where the L/R superscripts on the functional derivatives denote that
they are taken respectively from the left or from the right. Remember
that when dealing with anticommuting variables right and left
derivatives can differ by sign factors. Remember also that in the case
of a local functional
\begin{equation}
F=\int d^nx\, f,
\end{equation}
the functional derivative is given by the Euler-Lagrange derivative of
its integrand
\begin{equation}
  \frac{\delta^RF}{\delta\Phi^a(x)}
  =\frac{\delta^Rf}{\delta\Phi^a}\biggr\rvert_{\phi(x)},
\end{equation}
evaluated at a section, with similar relations for antifields. With
respect to the $\mathbb{Z}$-grading, called \textbf{ghost number},
\begin{center}
\begin{tabular}{ c | c c c c }
 gh & 0 & 1 & -1 & -2 \\ \hline 
 & $\phi^i$ & $C^\alpha$ & $\phi_i^*$ & $C_\alpha^*$ \\    
\end{tabular},
\end{center}
the antibracket has ghost number 1, in the sense that if
$\mathcal{F}^{g_i}$ is a functional of ghost number $g_1$, then
\begin{equation}
  (\cdot,\cdot):\mathcal{F}^{g_1}\times\mathcal{F}^{g_2}
  \longrightarrow\mathcal{F}^{g_1+g_2+1}.
\end{equation}

The antibracket
has the following properties:

\noindent {\bf graded antisymmetry:} 
\begin{equation}
(F,G)=-(-)^{(|F|+1)(|G|+1)}(G,F),\label{eq:29}
\end{equation}
{\bf graded Jacobi identity:}
\begin{equation}
(F,(G,H))=((F,G),H)+(-)^{(|F|+1)(|G|+1)}(G,(F,H))\label{eq:37},
\end{equation}
where $|G|$ denotes the $\mathbb{Z}_2$-grading of $G$, which is 1 for
fermionic quantities and 0 for bosonic ones.

Note also that for a bosonic functional $B$, one can easily show that
\begin{equation}
  \frac{1}{2}(B,B)=\int d^nx\,
  \frac{\delta^RB}{\delta\Phi^a(x)}\frac{\delta^LB}{\delta\Phi_a^*(x)}
  =-\int d^nx\,\frac{\delta^RB}{\delta\Phi_a^*(x)}\frac{\delta^LB}{\delta\Phi^a(x)},
\end{equation}
when using that
\begin{equation}
  \label{eq:39}
  \frac{\delta^RF}{\delta\Phi^a(x)}=(-)^{|a|(|a|+|F|)}
  \frac{\delta^LF}{\delta\Phi^a(x)}, 
 \end{equation}
 and similar relations for antifields, with $|\phi^a|=|a|$ and
 $|\Phi_a^*|=|a|+1$ and ${\rm gh}\,{\Phi^*_a}=-{\rm gh}\,{\phi^a}-1$.

 For an irreducible gauge theory, the BV master action
 is a functional of ghost number $0$ that starts with
 the classical action to which one couples though antifields an
 irreducible generating set of gauge transformations with gauge
 parameters replaced by ghosts,
\begin{equation}\label{eq:3BVMaster1}
S=\int d^nx\left[\mathcal{L}+\phi_i^*R^i_\alpha(C^\alpha)+\dots \right],
\end{equation}
where $\mathcal L$ is the classical Lagrangian density. The higher
order terms hidden in $\dots$ are completely determined by requiring
that $S$ satisfies the \textbf{master equation}
\begin{equation}
\frac{1}{2} (S,S)=0.
\end{equation}
The (antifield-dependent) {\bf BRST differential}
is canonically generated through the master action (in the same sense
as in classical mechanics the Hamiltonian is the canonical generator
of time translations in the Poisson bracket),
\begin{equation}
s= (S,\cdot).\label{eq:gen}
\end{equation}
It raises the ghost number by $1$ and is nilpotent, $s^2=0$. Indeed,
the graded Jacoby identity implies that
\begin{equation}
(S,(S,\cdot))=((S,S),\cdot)-(S,(S,\cdot))=-(S,(S,\cdot)),
\end{equation}
when using the master equation.

With these ingredients, it can be proven that
\begin{theorem}
  The solution to the master equation exists, it is uniquely defined
  up to anticanonical transformations\footnote{Anticanonical
    transformations are the obvious generalizations of the canonical
    transformations of Hamiltonian mechanics.}, and the solution is a
  local functional.
\end{theorem}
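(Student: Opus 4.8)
The plan is to solve the master equation by \textbf{homological perturbation theory}, using the antifield number as the controlling grading and the acyclicity of the Koszul--Tate differential as the engine of the recursion. Starting from the boundary data fixed in \eqref{eq:3BVMaster1}, I would expand the master action into terms of definite antifield number,
\begin{equation}
S=\sum_{k\ge 0}S_k,\qquad S_0=\int d^nx\, \mathcal L,\qquad S_1=\int d^nx\,\phi^*_iR^i_\alpha(C^\alpha),
\end{equation}
and correspondingly decompose the BRST differential $s=(S,\cdot)$ of \eqref{eq:gen} as $s=\delta+\gamma+s_1+s_2+\cdots$, where $\delta$ lowers the antifield number by one and is the Koszul--Tate differential, $\gamma$ preserves it and reduces to the longitudinal differential \eqref{eq:27} on the generators, and each $s_k$ (for $k\ge1$) raises it by $k$. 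The master equation $\tfrac12(S,S)=0$ is equivalent to $s^2=0$, which once sorted by antifield number becomes a tower of conditions, the first few reading $\delta^2=0$, then $\{\delta,\gamma\}=0$, then $\gamma^2+\{\delta,s_1\}=0$, and so on.

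First I would check that the lowest orders reproduce the classical input. The antifield-number-zero part of the master equation is $(S_0,S_1)=\int d^nx\,\tfrac{\delta L}{\delta\phi^i}R^i_\alpha(C^\alpha)=0$, which holds precisely by the Noether identity \eqref{Noether} after an integration by parts, while the next order encodes the on-shell closure of the gauge algebra \eqref{eq:3generating} and fixes $S_2$ in terms of the structure operators $f^\gamma_{\alpha\beta}$. In general, collecting the terms of antifield number $k\ge1$ in $(S,S)=0$ yields an equation of the schematic form
\begin{equation}
\delta S_{k+1}=B_k(S_0,\dots,S_k),
\end{equation}
whose right-hand side is built entirely from the already-constructed lower-order terms. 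The key observation is that $B_k$ is $\delta$-closed, a fact following from the graded Jacobi identity \eqref{eq:37} together with the master equations solved at lower order. By the acyclicity of the Koszul--Tate differential in positive antifield number, \eqref{eq:KTCoho} (used, as the master action is a functional, modulo total derivatives $d_H$), a $\delta$-closed quantity of antifield number $k>0$ is $\delta$-exact, so a solution $S_{k+1}$ exists; induction on $k$ produces the full series and establishes existence. Since the homology in \eqref{eq:KTCoho} is computed in the space of local forms, the contracting homotopy acts within local functionals, so each $S_{k+1}$ is local and the resulting $S$ is a local functional.

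For uniqueness, I would take two solutions $S$ and $S'$ sharing the data $S_0,S_1$ and show by induction that they are related by an anticanonical transformation, i.e.\ by a change of variables generated through the antibracket. At the lowest order where they differ, both satisfy $\delta S_{k+1}=\delta S'_{k+1}=B_k$ with identical right-hand side, so $S_{k+1}-S'_{k+1}$ is $\delta$-closed, hence $\delta$-exact by \eqref{eq:KTCoho}. A $\delta$-exact ambiguity of exactly this type is what an infinitesimal anticanonical transformation $S\mapsto S+(S,\Psi)$ can absorb, the generator $\Psi$ being read off from the homotopy. Iterating order by order removes all discrepancies and shows that $S$ and $S'$ are anticanonically equivalent.

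The main obstacle is the $\delta$-closedness of the obstructions $B_k$ at each stage: verifying $\delta B_k=0$ requires careful bookkeeping with the graded antisymmetry \eqref{eq:29} and the Jacobi identity \eqref{eq:37}, feeding in the lower-order pieces of the master equation in an essential way, and it is precisely the step that breaks down for \emph{reducible} theories, where the Koszul--Tate resolution---and hence the acyclicity \eqref{eq:KTCoho}---must be enlarged. A secondary point demanding care is that locality be preserved at every step; this is not automatic from abstract homological algebra but is guaranteed here because \eqref{eq:KTCoho} is an isomorphism of \emph{local} homologies.
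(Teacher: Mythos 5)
The paper does not actually supply a proof of this theorem: it only states it and then points to the antifield number as the grading that organizes the argument, together with the expansion $s=\delta+\gamma+\sum_k s_k$ of \eqref{eq:42}. Your homological-perturbation-theory scheme is precisely the standard proof that this hint refers to (Fisch--Henneaux, and Chapter 17 of \cite{Henneaux:1992ig}), so in outline you are on the intended track: expansion in antifield number, lowest orders reproducing the Noether identities \eqref{Noether} and the closure relation \eqref{eq:3generating}, a recursion $\delta S_{k+1}=B_k$ with $\delta B_k=0$ from the Jacobi identity, and uniqueness by absorbing $\delta$-exact ambiguities into anticanonical transformations.

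There is, however, a genuine gap at the step where you claim that ``a $\delta$-closed quantity of antifield number $k>0$ is $\delta$-exact'' when working, as you must for a local functional, modulo total derivatives $d_H$. The acyclicity statement \eqref{eq:KTCoho} holds in the space of local \emph{forms}; once you pass to local functionals the relevant homology is $H^n_k(\delta|d_H)$, and the paper itself shows that this does \emph{not} vanish for $k=1,2$: by Theorem \ref{th4} and the discussion of characteristic cohomology, $H^n_1(\delta|d_H)$ classifies the non-trivial conserved currents and $H^n_2(\delta|d_H)$ the global reducibility parameters, both generically non-empty (the paper only asserts vanishing for $k>2$ in \eqref{eq:22}). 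So your argument, as stated, would be obstructed exactly at the first non-trivial orders. The standard repair is to observe that the obstructions $B_k$ are not arbitrary: since $S$ has ghost number $0$ and the antibracket raises ghost number by $1$, $B_k$ has ghost number $1$ and antifield number $k$, hence pure ghost number $k+1\geq 2$, i.e., it necessarily involves the ghosts $C^\alpha$; one then needs the separate lemma (due to Henneaux) that $H_k(\delta|d_H)$ vanishes for $k>0$ in strictly positive pure ghost number. Without this refinement one can still prove existence and uniqueness in the space of non-local functionals (where $\delta$ is genuinely acyclic), but locality of the solution --- which is part of the theorem --- then requires exactly this extra input. The same caveat applies to your uniqueness argument, which also needs $\delta$-exactness modulo $d_H$ of differences that a priori sit in these dangerous degrees; and note that the uniqueness claim of the theorem is slightly stronger than what you address, since two solutions need not share the same generating set $S_1$.
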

An important additional degree to prove this theorem on the existence
of the master action and to analyse the antifield dependent BRST
differential is the antifield number.
As we have seen before, for irreducible gauge theories, it assigns $1$
to the $\phi^*_i$ and their derivatives, $2$ to the $C^*_\alpha$ and
their derivatives, and $0$ to all other variables. It then follows
that the classical action is of antifield number $0$, the second term
in \eqref{eq:3BVMaster1} is of antifield number $1$. More precisely,
the theorem states that the $\dots$ in \eqref{eq:3BVMaster1} are terms
of antifield number higher or equal to $2$. Accordingly, the expansion
of BRST differential in terms of the antifield number is
\begin{equation}
  \label{eq:42}
  s=\delta +\gamma +\sum_{k\geq 0} s_k,
\end{equation}
where $\delta$ lowers the antifield number by $1$, $\gamma$ preserves
the antifield number, while $s_k$ raises the antifield number by $k$. 

From \eqref{eq:gen}, it follows that the action of the BRST
differential on fields and antifields is explicitly given by
\begin{equation}
\begin{cases}
s\Phi^a(x)=-\frac{\delta^RS}{\delta\Phi^*_a(x)}, \\
s\Phi^*_a(x)=\frac{\delta^RS}{\delta\Phi^a(x)}.
\end{cases}
\end{equation}
Since $S$ is a local functional, the BRST differential can be extended
to the derivatives of the fields and antifields and written as a
generalized vector field on the jet-bundle that commutes with total
derivatives.

It also follows from \eqref{eq:3BVMaster1} and the definition of the
antifield number that $\delta$ coincides with the Koszul-Tate
differential analysed previously,
\begin{equation}
  \label{eq:43}
  \delta \phi^*_i=\frac{\delta L}{\delta\phi^i},\quad \delta
  C^*_\alpha = R^{+i}_\alpha[\phi^*_i],\quad \delta \Phi^a=0, 
\end{equation}
while the BRST differential encodes gauge invariance in the sense
that
\begin{equation}
  \label{eq:44}
  \gamma \phi^i = R^i_\alpha(C^\alpha), 
\end{equation}
i.e., to lowest order in antifield number, the BRST transformation of
the original fields is a gauge transformation with gauge parameters
replaced by anticommuting ghosts.

The higher order terms in the master action encode how complicated the
algebra of gauge symmetries actually is. For instance, the next term
is given in terms of the structure operators of the algebra,
\begin{equation}
\int d^nx \frac{1}{2} C^*_\alpha
f^\alpha_{\beta\gamma}(C^\beta,C^\gamma).\label{eq:40}
\end{equation}
This terms determines the BRST transformations of the ghosts to lowest
order in antifield number, 
\begin{equation} \label{eq:4BRSTghost1}
\gamma C^\alpha=-\frac{1}{2}f{^\alpha}_{\beta\gamma}(C^\beta,C^\gamma).
\end{equation}

For theories with closed algebras involving structure constants or
field-independent structure operators, this is the only additionl term
needed in the master action.
\begin{center}
  ----------
\end{center}

\subsection{Exercise 12}
\label{sec:exercise-12}

Check that in this particular case, the expansion of the BRST
differential according to antifield number stops in degree $0$,
$s=\delta+\gamma$, and work out $\gamma\phi^*_i$, $\gamma C^*_\alpha$.

\begin{center}
  ----------
\end{center}

The
first, simple example
is as usual electromagnetism, for which
\begin{align}
\phi^i=A_\mu, && R_\mu(C)=\partial_\mu C.
\end{align}
In this case, no additional nonlinear terms have to be added to the
master action, which then is
\begin{equation}
S=\int d^4x\left[-\frac{1}{4}F_{\mu\nu}F^{\mu\nu}+A^{*\mu}\partial_\mu C \right].
\end{equation}

For \textbf{nonabelian gauge theories} based on semisimple Lie
algebras, a case which includes both Yang-Mills and Chern-Simons
theories, the only additional nonlinear term required is the one
involving the structure constants,
\begin{equation}\label{eq:3ActionGauge1}
  S=S_{YM/CS}[A]+\int d^nx\left[A^{*\mu}_\alpha D_\mu C^\alpha+
    \frac{1}{2}C^*_\alpha f{^\alpha}_{\beta\gamma}C^\beta C^\gamma \right].
\end{equation}
\begin{remark}
  For electromagnetism, Yang-Mills and Chern-Simons theories,
  couplings to matter fields $y^i$, which can be (complex) scalars,
  Dirac or Weyl fermions that transform under a matrix resesentation
  representation ${{T_\alpha}^i}_j$ of the gauge algebra,
\begin{equation}
  \label{eq:45}
  [T_\alpha,T_\beta]=f^\gamma_{\alpha\beta}T_\gamma,
\end{equation}
are introduced as follows. In the Lagrangian $L_M[y,\partial y]$,
supposed to be invariant under
$\delta_k y^i=-k^\alpha {{T_\alpha}^i}_j y^j$ with $k^\alpha$
constant, one replaces $\partial_\mu y^i$ by
$D_\mu y^i=\partial_\mu y^i + A^\alpha_\mu {{T_\alpha}^i}_j y^j$. It
the follows that $L_M[y,Dy]$ is invariant under the gauge
transformations $\delta_\epsilon A^\alpha_\mu$ and
$\delta_\epsilon y^i=-\epsilon^\alpha {{T_\alpha}^i}_j y^j$ with
spacetime dependent gauge parameters $\epsilon^\alpha(x)$. The master
action for the complete theory is then the one for electromagnetism,
Yang-Mills and Chern-Simons theories to which one adds
\begin{equation}
  \label{eq:46}
  \int d^nx \left(L_M[y,Dy]-C^\alpha {{T_\alpha}^i}_j y^j
    y^*_i\right). 
\end{equation}
\end{remark}

For \textbf{General Relativity}, one introduces diffeomorphism
ghosts $\xi^\mu$, and instead of structure constants, one deals
with field-independent structure operators, since
\begin{equation}
\delta_\xi g_{\mu\nu}=-\mathcal{L}_\xi g_{\mu\nu},
\end{equation}
and 
\begin{equation}
[\delta_{\xi_1},\delta_{\xi_2}]g_{\mu\nu}=-\mathcal{L}_{[\xi_1,\xi_2]}g_{\mu\nu}.,
\end{equation}
The BV master action in this case is
\begin{equation}
S=\int d^nx\left[\sqrt{|g|}R+g^{*\mu\nu}\mathcal{L}_\xi
  g_{\mu\nu}+\xi^*_\mu\partial_\nu\xi^\mu\xi^\nu \right]. 
\end{equation}

Finally for the Poisson-Sigma model, based on Exercise 10, one can
show that the BV master action is given by \cite{Cattaneo:1999fm}
\begin{multline}
  \label{eq:41}
  S=S^{PSM}+\int d^2x \Big[-
  X^*_i \alpha^{ij} C_j
    +\eta^{*i\mu}(\partial_\mu C_i +\partial_i \alpha^{jk} \eta_{j\mu}
    C_k)\\+\frac{1}{2} C^{*i}\partial_i\alpha^{jk}C_jC_k+\frac{1}{4}
    \epsilon_{\mu\nu}
    \eta^{*i\mu}\eta^{*j\nu}\partial_i\partial_j\alpha^{kl}
    C_kC_l \Big],
\end{multline} with $\epsilon^{01}=1$.

\section{Gauge fixation}
\label{sec:gauge-fixation}

We already discussed how, because of gauge symmetry, the quadratic
kernel of the action is non-invertible. In fact, this is not changed
in the master action \eqref{eq:3BVMaster1}, since we have not
introduced any additional terms quadratic in the classical fields. In
order to have an invertible quadratic kernel and thus well-defined
free propagators, allowing us to perform computations in perturbation
theory, we still need to gauge-fix the theory.

In order to do so, one usually has to introduce more fields and
antifields,
belonging to the so called \textbf{nonminimal sector}: for theories of
Yang-Mills type, these will be the fermionic \textbf{antighosts}
$\bar{C}^\alpha$ of ghost number $-1$, the bosonic \textbf{Lagrange
  multiplier} (also called \textbf{Nakanishi-Lautrup auxiliary field})
$B^\alpha$ of gost number $0$ and their respective antifields
$\bar{C}_\alpha^*,B_\alpha^* $ of respective ghost numbers $0,-1$. The
complete set of fields is then
$\Phi^a=(\phi^i,C^\alpha,\bar C^\alpha,B^\alpha)$.

The master action is extended to the non-minimal sector through 
\begin{equation}
S_{NM}=S-\int d^nx \bar{C}^*_\alpha B^\alpha.
\end{equation}
The BRST transformation of these new fields generated by $S$ are then
\begin{align}
s\bar{C}^\alpha=B^\alpha, && sB^\alpha=0, \\
s\bar{C^*_\alpha}=0, && sB^*_\alpha=-\bar{C}^*_\alpha,
\end{align}
so that it is simply a shift symmetry, not affecting the physics. From
the point of view of the cohomology of $s$,
$\bar{C}^\alpha$ and $B^\alpha$, their antifields and all their
derivatives form so-called
\textbf{contractible pairs}, which means that they do not contribute
to any of the relevant cohomology groups. They are only needed as a
convenient means to fix the gauge.

This is done through a so-called \textbf{gauge-fixing fermion}
$\Psi[\Phi]$, which is a fermionic functional of the fields alone of
ghost number $-1$ (and thus necessarily dependent on fields from the
non-minimal sector). For a large class of gauges in Yang-Mills type
theories for example, it may be choosen as
\begin{equation}\label{eq:4GFermionYM1}
  \Psi=\int d^nx \bar{C}^\alpha\left(\partial_\mu A^{\mu \beta}
    -\frac{\xi}{2}B^\beta \right)g_{\alpha\beta},
\end{equation}
where $g_{\alpha\beta}$ is the Cartan-Killing metric of the gauge
group, that can be used, together with its inverse, to lower and raise
Lie algebra indices.

The gauge-fixed action is then defined by a shift of the antifields by
a "functional gradient" term of the gauge-fixing fermion:
\begin{equation}
  S_{\Psi}[\Phi,\tilde\Phi^*]=S_{NM}\left[\Phi,\tilde \Phi^*
    +\frac{\delta\Psi}{\delta\phi}\right].
\end{equation}
In the specific case of Yang-Mills type theories in which $\Psi$ is defined by
\eqref{eq:4GFermionYM1} and the solution of the master action is linear
in antifields, the gauge-fixed action is
\begin{equation}
  S_{\Psi}[\Phi,\tilde\Phi^*]=S^{YM,CS}-s\Psi- \int d^nx\,
  s\phi^a\tilde\Phi^*_a,
\end{equation}
with
\begin{equation}
  -s\Psi=\int d^nx
  \left[-\partial_\mu\bar{C}_\alpha D^\mu C^\alpha-(\partial_\mu
    A^\mu_\alpha)B^\alpha+\frac{\xi}{2}B_\alpha B^\alpha \right].
\end{equation} 

A crucial point of the BV construction is that the gauge-fixing is a
\textbf{canonical transformation}, which means that it leaves the
canonical antibracket relations invariant. In fact, it is a canonical
transformation of the second type (following the classification of
Goldstein's textbook), for which $\Psi$ is the generator. This can be
easily seen by identifying the fields with the coordinates of
classical mechanics, and the antifields with the momenta:
\begin{align}
q\leftrightarrow\Phi, && p\leftrightarrow\Phi^*.
\end{align}
We recall that a canonical transformation $qdp=QdP+dF$
of the second type is generated by a function $F$ of the form
\begin{equation}
F=F_2(q,P)-QP,
\end{equation}
and transforms the canonical variables as
\begin{align}
p=\frac{\partial F_2}{\partial q}, && Q=\frac{\partial F_2}{\partial P}.
\end{align}
Gauge-fixing is a transformation of exactly this type, with
\begin{align}
F_2=qP+\psi(q) && \implies && 
\begin{cases}
p=P+\frac{\partial\psi}{\partial q} \\
q=Q.
\end{cases}
\end{align}

This fact has the important consequence that the gauge-fixed master
action is still a solution of the master equation in the new
variables, 
\begin{equation}
(S_{\Psi},S_{\Psi})_{\Phi,\tilde\Phi^*}=0.
\end{equation}
The gauge fixed (antifield-dependent) BRST differential is defined by
$s_\psi=(S_\Psi,\cdot)_{\Phi,\tilde\Phi*}$ and is nilpotent off-shell,
$s_\Psi^2=0$.

Note that in theories with a master action that is linear in
antifields, the BRST transformations of the fields coincide before and
after gauge fixing, $s_\Psi\Phi^a=s\Phi^a$.  In the general case,
defining the gauge fixed BRST differential without antifields by 
$s^g\Phi^a=s_\Psi\Phi^a[\Phi,\tilde\Phi^*=0]$, we have 
\begin{equation}
s^g S_\Psi[\Phi,0]=0,\label{eq:47}
\end{equation}
which follows by putting the antifields $\tilde\Phi^*_a$ to zero in
$\frac{1}{2}(S_\Psi,S_\Psi)=0$.  Except for theories with a linear
dependence in antifields, $s^g$ is in general only nilpotent when the
equations of motion defined by $S_\Psi(\Phi,0)$ hold,
$(s^g)^2\Phi^a\approx 0$. This follows by putting the antifields
$\tilde\Phi^*_a$ to zero in
$(S_\Psi,(S_\Psi,\phi^a)_{\Phi,\tilde\Phi^*})_{\Phi,\tilde\Phi^*}=0$.

In order to avoid this and to better control the original gauge
invariance of the theory, it is most convenient not to put the
antifields $\tilde\Phi^*$ to zero during renormalization. These
antifields then act as {\bf sources for the (non-linear) BRST
  transformation} whose renormalization is then controlled together
with the renormalization of the (gauge fixed) action.

\begin{center}
  ----------
\end{center}

\subsection{Exercise 13}
\label{sec:exercise-13}

Eliminate the auxiliary fields $B^\alpha$ in the gauge fixed master
action with antifields for Yang-Mills type theories. Show that
$s^g \bar C^a$ is only nilpotent on-shell. Work out the momentum space
propagators in Yang-Mills and Chern-Simons theories with and without
auxiliary $B^\alpha$ fields.

\begin{center}
  ----------
\end{center}

\section{Independence of gauge fixing}
\label{sec:fradk-vilk-theor}

The
statement that we didn't spoil the physical content of gauge
invariance after gauge-fixing is captured by the following:
\begin{theorem}[Fradkin-Vilkovisky theorem]
Expectation values of
BRST-closed operators are independent of
the choice of gauge-fixing.
\end{theorem}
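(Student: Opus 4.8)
The plan is to prove the Fradkin–Vilkovisky theorem by showing that an infinitesimal change of the gauge-fixing fermion shifts the gauge-fixed action by a BRST-exact term, so that the expectation value of any BRST-closed operator is unchanged. The central object is the path integral in the gauge-fixed theory, $\langle \mathcal{O}\rangle_\Psi=\int[d\Phi]\,\mathcal{O}\,e^{\frac{i}{\hbar}S_\Psi[\Phi,0]}$, and I want to compute how this depends on $\Psi$. First I would vary $\Psi\to\Psi+\delta\Psi$ and use that gauge-fixing is a canonical transformation generated by $\Psi$ (as established in the gauge-fixation section): since $S_\Psi[\Phi,\tilde\Phi^*]=S_{NM}[\Phi,\tilde\Phi^*+\frac{\delta\Psi}{\delta\Phi}]$, the variation of the gauge-fixed action at vanishing antifields is
\begin{equation}
\delta S_\Psi[\Phi,0]=\int d^nx\,\frac{\delta^R S_{NM}}{\delta\Phi^*_a}\Big|_{\tilde\Phi^*=\frac{\delta\Psi}{\delta\Phi}}\frac{\delta(\delta\Psi)}{\delta\Phi^a}
=-\int d^nx\,(s^g\Phi^a)\frac{\delta(\delta\Psi)}{\delta\Phi^a},
\end{equation}
where I used the definition $s^g\Phi^a=-\frac{\delta^R S_\Psi}{\delta\Phi^*_a}[\Phi,0]$. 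The key algebraic fact is that this rearranges into $s^g(\delta\Psi)$ up to the precise structure dictated by the master equation, so that $\delta S_\Psi[\Phi,0]=s^g(\delta\Psi)$ as functionals.

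The next step is to translate this classical statement into the path integral. The crucial point is that the BRST transformation acts on the measure as a change of integration variables $\Phi^a\to\Phi^a+\varepsilon\, s^g\Phi^a$, and I would argue that the measure $[d\Phi]$ is invariant under this (graded) change, i.e.\ that the Jacobian superdeterminant equals one. This invariance is precisely the statement that the BRST transformation is (super)volume-preserving, which holds because $s^g$ is generated canonically and the relevant divergence vanishes. Granting measure invariance, inserting a BRST variation of any functional under the integral yields a vanishing result, which is the operator form of the Ward identity:
\begin{equation}
\int[d\Phi]\,(s^g X)\,e^{\frac{i}{\hbar}S_\Psi[\Phi,0]}=0
\end{equation}
for any $X$, using $s^g S_\Psi[\Phi,0]=0$ from equation \eqref{eq:47}.

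Combining these, the $\Psi$-dependence of the expectation value is
\begin{equation}
\frac{d}{d\tau}\langle\mathcal{O}\rangle_{\Psi+\tau\delta\Psi}\Big|_{\tau=0}
=\frac{i}{\hbar}\int[d\Phi]\,\mathcal{O}\,\big(s^g\,\delta\Psi\big)\,e^{\frac{i}{\hbar}S_\Psi[\Phi,0]}
=\frac{i}{\hbar}\int[d\Phi]\,s^g\big(\mathcal{O}\,\delta\Psi\big)\,e^{\frac{i}{\hbar}S_\Psi[\Phi,0]},
\end{equation}
where in the last step I used that $\mathcal{O}$ is BRST-closed, $s^g\mathcal{O}\approx 0$, so that the Leibniz rule leaves only the term in which $s^g$ hits $\delta\Psi$. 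By the Ward identity with $X=\mathcal{O}\,\delta\Psi$ this vanishes, proving that $\langle\mathcal{O}\rangle_\Psi$ is independent of $\Psi$.

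The main obstacle, and the step requiring the most care, is the measure invariance / triviality of the BRST Jacobian: in the naive manipulation one must check that the graded trace $\mathrm{str}\,(\partial (s^g\Phi)/\partial\Phi)$ vanishes, which is a statement about the cancellation of bosonic and fermionic contributions and is tied to the local, canonical nature of $s^g$. A second subtlety is that $s^g$ is in general only nilpotent on-shell (as noted after equation \eqref{eq:47}), so the identity $s^g\mathcal{O}\approx 0$ and the Leibniz manipulation must be justified inside the path integral, where the equations of motion hold as Schwinger–Dyson relations; handling this cleanly is exactly what motivates keeping the antifields $\tilde\Phi^*$ as sources rather than setting them to zero, and a fully rigorous treatment would phrase the whole argument with antifields retained and then specialize.
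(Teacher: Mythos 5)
Your argument is correct at the formal level the paper operates at and follows essentially the same route: both reduce the variation of the expectation value to a BRST-exact insertion and kill it by an integration by parts in the path integral, neglecting the same $\delta(0)$ contact term (your Jacobian/measure-invariance issue) and invoking the Schwinger--Dyson equations to dispose of the on-shell remainder. The only real difference is organizational: the paper keeps the antifields $\tilde\Phi^*$ throughout, states BRST-closedness as $(S_\Psi,X)=0$ and uses $\tfrac{1}{2}(S_\Psi,S_\Psi)=0$ directly, which is precisely the ``fully rigorous treatment with antifields retained'' that you correctly identify as the cleaner formulation and defer to at the end.
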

Indeed, let $\Psi$ and $\Psi+\delta\Psi$ be two different gauge-fixing
fermions, and $X[\Phi,\tilde\Phi^*]$ be BRST closed, i.e., such that
\begin{equation}
(S_\Psi,X)=0.\label{eq:54}
\end{equation}
In particular, if $X$ only depends on the original
fields, this condition means that $X$ is gauge invariant. we have,
\begin{equation}
\begin{split}
  & \langle0,+\infty|T\hat{X}|0,-\infty\rangle_{\Psi+\delta\Psi}
  -\langle0,+\infty|T\hat{X}|0,-\infty\rangle_{\Psi} \\
  = &\int[d\Phi]\left(e^{\frac{i}{\hbar}S_{\Psi+\delta\Psi}}
    -e^{\frac{i}{\hbar}S_{\Psi}} \right)X \\
  = &\frac{i}{\hbar}\int[d\Phi]
  \frac{\delta^RS_{\Psi}}{\delta\tilde\Phi^*_A}
  \frac{\delta^L\delta\Psi}{\delta\Phi^A} e^{\frac{i}{\hbar}S_{\Psi}}X+O((\delta\Psi)^2) \\
  = &
  (-)^{|A|+1}\frac{i}{\hbar}\int[d\Phi]\frac{\delta^L(\delta\Psi)}{\delta\Phi^A}
  \frac{\delta^RS_{\Psi}}{\delta\tilde\Phi_A^*}e^{\frac{i}{\hbar}S_{\Psi}}X+O((\delta\Psi)^2)
  \\
  =
  &\frac{i}{\hbar}\int[d\Phi]\delta\Psi\frac{\delta^L}{\delta\Phi^A}\left[(s_\psi\Phi^A)
    e^{\frac{i}{\hbar}S_{\Psi}}X\right]+O((\delta\Psi)^2) .
\end{split}
\end{equation}
Here in the second equality we have expanded the exponential of the
action to first order in the deformation of the gauge-fixing, while in
the fourth equality we have performed an integration by parts inside
the functional integral, and we have noted that
$\frac{\delta^RS_{\Psi}}{\delta\Phi_A^*}=-s_\psi\Phi^A$. Neglecting
the contact term contribution
$\frac{\delta^L}{\delta\Phi^A}(s_\Psi\Phi^A)$ which is proportional to
$\delta(0)$
and taking into account antifield dependent BRST invariance of the
gauge fixed action in the
form 
\begin{equation}
  s_\Psi\Phi^A\frac{\delta^L S_\Psi}{\delta\phi^A}=\frac{1}{2}
(S_{\Psi}, S_{\Psi})=0,\label{eq:gaugefixedBRST}
\end{equation}
and also \eqref{eq:54}, we find that
\begin{equation}
\begin{split}
&
\langle0,+\infty|T\hat{X}|0,-\infty\rangle_{\Psi+\delta\Psi}
-\langle0,+\infty|T\hat{X}|0,-\infty\rangle_{\Psi}\\
=& \frac{i}{\hbar}\int[d\Phi]\delta\Psi
e^{\frac{i}{\hbar}S_{\Psi}}(s_\psi\phi^A)
\frac{\delta^L X}{\delta\Phi^A}+O((\delta\Psi)^2)\\
= & \frac{i}{\hbar} \int[d\Phi]\delta\Psi
e^{\frac{i}{\hbar}S_{\Psi}}\frac{\delta^R
  S_\Psi}{\delta\Phi^A}\frac{\delta^L X}{\delta \tilde\Phi^*_A}+O((\delta\Psi)^2), 
\end{split}
\end{equation}
The term of first order in $\delta\Psi$ then vanishes on account of
the Schwinger-Dyson equations of the theory.
\begin{remark}
  These are formal arguments that hold at tree level. They may be
  violated by $\hbar$-correction when taking renormalization into
  account. Some of these $\hbar$-corrections are captured in an
  elegant way by the so-called quantum BV formalism. However, this
  formalism remains formal as well unless due care is devoted to
  renormalization.
\end{remark}

\section{Slavnov-Taylor identities and Zinn-Justin equation} 

BRST invariance of the gauge fixed action (with antifields)
leads to relations among correlation functions, generalizing the
\textbf{Slavnov-Taylor (ST) identities} of non-abelian gauge
theories. They are most economically stated in terms of generating
functionals. Indeed, starting from translation invariance of the path
integral, neglecting contact terms and using
\eqref{eq:gaugefixedBRST}, we get
\begin{multline}
0 =\frac{1}{Z[J,\tilde\Phi^*]}\int[d\Phi]\frac{\delta^L}{\delta\Phi^A}\left(s_\Psi\Phi^A
  e^{\frac{i}{\hbar}(S_\Psi+J_A\Phi^A)} \right)
\\=\frac{i(-)^{|A|}J_A}{\hbar Z[J,\tilde\Phi^*]} \int[d\Phi]
s_\Psi\Phi^A
e^{\frac{i}{\hbar}(S_\Psi+J_A\cdot\Phi^A)}=\frac{i}{\hbar}(-)^{|A|}J_A\langle
\widehat{s_\Psi\Phi^A}\rangle^{J,\tilde\Phi^*}.
\end{multline}
Defining the normalized generating functional for connected Green's
functions $W[J,\tilde\Phi^*]$ through
\begin{equation}
  \label{eq:48}
  e^{\frac{i}{\hbar}
    W[J,\tilde\Phi^*]}=\frac{Z[J,\tilde\Phi^*]}{Z[0,0]}, 
\end{equation}
this equation can be written as 
\begin{equation}
  \label{eq:49}
  (-)^{|A|}J_A\frac{\delta^R\mathcal{W}}{\delta\Phi^*_A}=0. 
\end{equation}
Finally, performing the Legendre transform with respect to $J_A$,
\begin{equation}
  \label{eq:50}
  \tilde \Phi^A=\langle\Phi^A\rangle^{J,\tilde\Phi^*}=\frac{\delta^L
    W}{\delta J^A},
\end{equation}
with inverse relation
$J_A=J_A[\tilde\Phi,\tilde\Phi^*]$, 
we can define the \textbf{quantum effective action},
\begin{equation}
  \Gamma[\tilde{\Phi},\tilde\Phi^*]
  =\left[\mathcal{W}[J,\tilde\Phi^*]-J_A\tilde{\Phi}^A \right]
  \bigg\rvert_{J=J[\tilde{\Phi},\tilde\Phi^*]},
\end{equation}
which can be shown to be the generating functional of 1PI diagrams.
In particular, the fact that the effective action is defined as a
Legendre transformation implies 
\begin{align}
  &&\frac{\delta^R\Gamma}{\delta\tilde{\Phi}^A}=-J_A[\tilde{\Phi},\tilde\Phi^*],
  &&
     \frac{\delta^R\Gamma}{\delta\Phi^*_A}
     =\frac{\delta^R\mathcal{W}}{\delta\tilde\Phi^*_A}|_{J=J[\tilde\Phi,\tilde\Phi^*]} .
\end{align}
In terms of the effective action, the ST identity then takes the form
of the generalized {\bf Zinn Justin equation},
i.e., the master equation for the effective action,
\begin{equation}
  \frac{1}{2}(\Gamma,\Gamma)_{\tilde\Phi,\tilde\Phi^*}=0.
\end{equation}

\section{Elements of renormalization}
\label{sec:wess-zumino-cons}

The effective action is crucial for the renormalized theory: once its
is well-defined so is the complete theory. This is because connected
Green's functions can be constructed by using the effective action in
order to derive the Fyenman rules while summing over connected tree
graphs alone. The passage from connected to all Green's functions is
purely combinatorial, while the passage from Green's functions to
$S$-matrix elements through reduction formulas does not involve
ultraviolet issues either.

Gauge invariance is encoded in the master equation for the effective
action, whose derivation above was formal. In order to go beyond such
formal considerations, one may start by assuming the existence of a
regularisation scheme that is consistent with gauge invariance, such
as dimensional regularization in the absence of chiral fermions and
the Levi-Civita (pseudo-)tensor. In this case, the regularized
effective action $\Gamma_{\rm reg}$ continues to satisfy the master
action,
\begin{equation}
\frac{1}{2} (\Gamma_{\rm reg},\Gamma_{\rm reg})_{\tilde\Phi,\tilde\Phi^*}=0.
\end{equation}
When using that 
\begin{equation}
\Gamma_{\rm
  reg}[\tilde\Phi,\tilde\Phi^*]=S_\Psi[\tilde\Phi,\tilde\Phi^*]
+\hbar\Gamma^{(1)}+O(\hbar^2),\label{eq:51}
\end{equation}
with $\hbar$ keeping track of the loop order in the perturbative
expansion, and taking into account the master equation for $S_\Psi$,
the master equation for $\Gamma_{\rm reg}$ reduces at order $\hbar$ to
\begin{equation}
  \label{eq:52}
  (S_\Psi,\Gamma^{(1)})_{\tilde\Phi,\tilde\Phi^*}=0. 
\end{equation}
The 1-loop contribution is made of a divergent
and a finite part,  
\begin{equation}
\Gamma^{(1)}=\frac{1}{\e}\Gamma^{(1)}_{\rm div}+\Gamma^{(1)}_{\rm fin}, 
\end{equation}
where we implicitly are assuming dimensional regularization but
$1/\e$ could be replaced by another ultraviolet regulator
regulator. A crucial property is that the one-loop divergences are
{\bf local functionals}. We then get 
\begin{align}
(S_\Psi,\Gamma^{(1)}_{\rm fin})=0, && (S_\Psi,\Gamma^{(1)}_{\rm div})=0,
\end{align}
because the two terms are of different orders in $\e$. The latter
equation thus means that one-loop divergences are
BRST-closed
local functionals. If furthermore, they can be expressed as
\begin{equation}
\Gamma^{(1)}_{\rm div}=(S_\Psi,\Xi),
\end{equation}
for some local function $\Xi$ of ghost number -1, then they are
trivial divergences in the sense that they can be absorbed by a canonical
redefinition of fields and antifields, while the absorption of nontrivial
divergences requires the presence of appropriate coupling constants in
the starting point action.

Non trivial divergences are thus decribed by antifield-dependent BRST
cohomology in ghost number $0$ in the space of local functionals, 
\begin{equation}
H^0(s_\Psi)\simeq H^0(s),
\end{equation}
where we stress that this cohomology is isomorphic to the gauge
fixing independent cohomology associated to the minimal solution of
the master action since both are related by an
anticanonical transformation and the trivial pairs of the gauge fixing
sector drop out of cohomology.  

In renormalizable theories one can show that, once lower order
divergences have been absorbed to a given order, the divergences at
the next order are again BRST closed.

\begin{remark}
  The classical problem of constructing interactions consistent with
  gauge invariance (sometimes also refered to as ``Noether
  procedure'')
  can be formulated as a classical deformation problem using the BV
  formalism \cite{Barnich:1993vg}. Similarily to divergences which
  consitute a quantum deformation, infinitesimal deformations in that
  context are also controlled by antifield-dependent BRST cohomology
  in ghost number $0$ in the space of local functionals. The
  deformation parameter is no longer $\hbar$ but can for instance be
  related to the number of fields involved in the interactions or to
  the couplings constants. In the context of non-commutative
  Yang-Mills models, the existence of Seiberg-witten maps can also be
  discussed in these terms \cite{Barnich:2001mc,Barnich:2003wq}.
\end{remark}

Up to now, we have assumed that the regularization used was compatible
with gauge invariance, so that the master equation was still satisfied
after regularization. It is however possible that no such regulator
exists: in such a case the generating functional $\Gamma$ no longer
satisfies the master equation, but rather
\begin{equation}
\frac{1}{2}(\Gamma,\Gamma)=\hbar\mathcal{A}\circ\Gamma,
\end{equation}
where the insertion $\mathcal{A}$ corresponds again to a local
functional, called the \textbf{anomaly}. In this equation, $\Gamma$
either refers to the renormalized effective action in the case of of
power counting renormalizable theories and the equation is derived
using the so-called quantum action principle in the framework of BPHZ
renormalization (see \cite{Piguet:1995er} for a review), or it refers
to the regularized effective action for instance in the context of
dimensional renormalization, where so-called evanescent terms
proportional to $\epsilon$ and due to the regularization, produce the
right hand side (see \cite{Aoyama:1980yw,Tonin:1992wf} for more
details and also \cite{Barnich:2000me} for related
considerations). Note that in both these frameworks divergences,
respectively the associated finite amiguities in the BPHZ approach,
are still controlled by $H^0(s)$, even in the presence of anomalies.

The generalized \textbf{Wess-Zumino consistency
  conditions}
then follows:
\begin{align}
(\Gamma,(\Gamma,\Gamma))=0 && \implies && (\Gamma,\mathcal{A}\circ \Gamma)=0.
\end{align}
To the lowest order in $\hbar$, this means that
\begin{equation}
(S_\Psi,\mathcal{A})=0.
\end{equation}
In this case, if the anomaly is exact, 
\begin{equation}
\mathcal{A}=(S_\Psi,\Sigma),
\end{equation}
it can be reabsorbed by adding a local, BRST breaking counterterm,
$\Sigma$, $S_\Psi\to S_\Psi-\Sigma$. Hence, to lowest nontrivial
order, anomalies are characterized by antifield-dependent BRST
cohomology in ghost number $1$ in the space of local functionals,
\begin{equation}
H^1(s_\Psi)\simeq H^1(s),
\end{equation}
the determination of which is a gauge-invariant, computable problem.

In Yang-Mills theory, the \textbf{Adler-Bardeen anomaly} constitues a
famous example of such a cohomology class. It is computed in the
so-called universal algebra and corresponds to $\omega^{1,4}$ in the
chain of descent equations that relates a characteristic class in
form degree $6$ to
a primitive element in ghost number $3$,
\begin{equation}
  \begin{split}
    {\rm Tr}\ F^3&=d\omega^{0,5},\\
    s\omega^{0,5}+d_H\omega^{1,4}&=0,\\
    s\omega^{1,4}+d_H\omega^{2,3}&=0,\\
    \vdots &\\
    s\omega^{4,1}+d_H{\rm Tr}\ C^3&=0,\\
    s {\rm Tr}\ C^3&=0.
  \end{split}
\end{equation}

When imposing power-counting restrictions, semi-simple Yang-Mills
theories can be shown to be renormalizable quite easily in this
framework. Indeed, in this case, it is almost straightforward to show
that the only BRST cohomology class in ghost number $0$ in the space
of local functionals corresponds to the classical, gauge invariant
Yang-Mills action itself: all divergences which are not absorbable by
canonical field-antifield redefinitions can thus be absorbed by
redefinitions of the couplings associated with the different simple
factor of the gauge algebra.

The problem of completely characterizing
antifield-dependent
BRST cohomology in various ghost numbers, independently
of power-counting restrictions, has been addressed for Yang-Mills type
theories in \cite{Barnich:1994db,Barnich:1995ap} and reviewed in
\cite{Barnich:2000zw}. The inclusion of free abelian vector fields has
been completed more recently
\cite{Barnich:2017nty,Barnich:2018nqa}. Einstein gravity has been
considered in \cite{Barnich:1994mt}.

Based on these results, it has been argued \cite{Gomis:1996jp} that
these theories
are renormalizable in the modern sense, that is to say, in the sense
of effective field theories: for semi-simple Yang-Mills theory or
Einstein gravity, the antifield-dependent BRST cohomology is exhausted
by integrated, on-shell gauge invariant observables. Once these are
included with independent couplings from the very beginning, all
divergences can be absorbed, either by anticanonical field-antifield
redefinitions or by redefinitions of these couplings.

\begin{center}
  ----------
\end{center}

\paragraph{Summary} In this set of lectures, we have tried to set up
the relevant background material to compute antifield-dependent BRST
cohomology classes in the space of local functionals for generic
irreducible gauge theories. What physical information they provide may
be summarized in the following table:

\begin{center}
\begin{tabular}{ c |c }
  g 		& $H^g(s)$ 		\\  \hline
  \vdots	& $\emptyset$; 	\\  
  -3 	& $\emptyset$; 	\\
  -2		&	reducibility parameters (Killing vectors)
       / ADM-type surface charges\\
           -1 	&	conserved currents and global symmetries \\
  0		& 	loop divergences, consistent deformations \\
  1		&	anomalies \\
  \vdots	&

\end{tabular}
\end{center}

\section*{Acknowledgements}
\label{sec:acknowledgements}

\addcontentsline{toc}{section}{Acknowledgments}

The work of G.B.~is supported by the F.R.S.-FNRS Belgium,
convention FRFC PDR T.1025.14 and convention IISN 4.4503.15.




\providecommand{\href}[2]{#2}\begingroup\raggedright\endgroup

\end{document}